\documentclass[poms,final,nonblindrev]{poms1_V1}

\OneAndAHalfSpacedXI

\makeatletter

\RRHFirstLine{}
\RRHSecondLine{\bf\theRUNAUTHOR: \it\theRUNTITLE\vspace{1mm}}

\LRHFirstLine{}
\LRHSecondLine{\bf\theRUNAUTHOR: \it\theRUNTITLE\vspace{1mm}}

\renewcommand{\theARTICLETOPLEFT}{}
\renewcommand{\theARTICLETOPRIGHT}{}

\makeatother

\usepackage{graphicx}
\usepackage{subfigure, epsfig}
\usepackage{natbib}
\usepackage[inline]{enumitem}
\usepackage{comment}
\usepackage{booktabs}
\usepackage{subcaption}
\usepackage{multirow}

 \bibpunct[, ]{(}{)}{,}{a}{}{,}%
 \def\bibfont{\small}%
\TheoremsNumberedThrough     
\ECRepeatTheorems

\EquationsNumberedThrough    

\begin{document}


\RUNAUTHOR{Sahin, Kilic, and Yildirim}

\RUNTITLE{Accessibility-Dependent Maintenance Thresholds}

\TITLE{Condition-Based Maintenance of Degrading Assets under
Intermittent Accessibility}

\ARTICLEAUTHORS{%
\AUTHOR{Muhammet Ceyhan Sahin}
\AFF{Industrial \& Systems Engineering, Wayne State University, \EMAIL{ceyhan@wayne.edu}} 
\AUTHOR{Murat Yildirim}
\AFF{Industrial \& Systems Engineering, Wayne State University, \EMAIL{murat@wayne.edu}} 
\AUTHOR{Onur A. Kilic}
\AFF{Department of Operations, University of Groningen, \EMAIL{o.a.kilic@rug.nl}}
} 

\ABSTRACT{%
Many maintenance models implicitly assume that maintenance can be performed whenever intervention is warranted. In practice, however, environmental uncertainty, such as weather and sea conditions, can make maintenance opportunities intermittent and dynamically evolving. We consider a degrading asset for which preventive maintenance may be performed before failure, but intervention is possible only when the asset is accessible. Accessibility evolves stochastically over time and therefore affects not only whether maintenance can be performed now, but also the value of waiting for future opportunities. We formulate this setting as a finite-state Markov decision process under a long-run average cost criterion and investigate the structure of the optimal maintenance policy. Under monotone degradation and cost conditions, we show that an optimal policy retains a threshold form in asset condition, but unlike a single threshold, the optimal threshold varies with the accessibility state. Thus, intervention depends jointly on the asset condition and the expected evolution of future maintenance opportunities. Through a numerical study motivated by offshore wind turbine maintenance, we examine how accessibility dynamics, degradation characteristics, and economic parameters shape the optimal thresholds and compare the optimal policy with constant-threshold and age-based maintenance policies. In a representative setting, adapting the condition threshold to accessibility reduces long-run average cost by 4.31\% relative to an optimized condition-based maintenance policy that uses a single threshold under the same stochastic accessibility process. Relative to an optimized age-based policy evaluated under the same stochastic accessibility process, jointly using condition and accessibility information reduces long-run average cost by 33.95\%. The results show that maintenance thresholds are not universal: they should adapt to the degradation and accessibility characteristics of the system, and failing to account for these conditions can lead to substantial cost increases.
}%

\KEYWORDS{condition-based maintenance; stochastic accessibility; Markov decision process; threshold policy; offshore wind}

\maketitle

%


\section{Introduction}\label{intro}

Capital-intensive assets degrade during operation, gradually losing performance and eventually reaching failure. Degradation is common to many physical assets, but its consequences are particularly important for capital-intensive systems because failures can result in substantial repair costs, production losses, and downtime. Preventive maintenance mitigates these consequences by intervening before failure occurs. 
Condition-based maintenance (CBM) improves the timing of preventive
intervention by using information about the asset's current condition,
rather than relying only on age or usage, to determine when maintenance
is economically justified \citep{jardine2006review,sun2023robust}.

A well-known structural feature of CBM models is the emergence of a condition-based threshold. Under appropriate degradation and cost conditions, the asset continues operating while its condition remains below a critical degradation level, and preventive maintenance becomes optimal once that level is reached \citep{elwany2011structured,si2018optimal}. This interpretation, however, implicitly relies on maintenance being possible when the threshold is reached. If intervention cannot be performed at that time, the decision problem changes fundamentally because the operator must account not only for the asset condition but also for when another maintenance opportunity may exist.

Such restrictions arise when maintenance opportunities depend on uncertain operating or environmental conditions. If an asset is currently accessible, the operator must decide whether to (i) use the present maintenance opportunity or (ii) continue operating and wait for a future opportunity whose timing is uncertain. Under the second option, the asset remains exposed to further degradation and may fail before another accessible period. Because corrective maintenance is subject to the same accessibility restrictions, such a failure can result in prolonged downtime until the asset becomes accessible. Thus, the value of performing preventive maintenance depends not only on the current degradation state, but also on the stochastic evolution of future maintenance opportunities.

Offshore wind provides a natural motivating example. Accessibility of an offshore wind turbine depends strongly on environmental conditions, particularly wind speed and wave height, and unfavorable conditions can restrict access for extended periods \citep{ren2021offshore}. A turbine may therefore reach a degradation level at which preventive maintenance is desirable while remaining inaccessible for intervention. Maintenance decisions must consequently account for both turbine condition and the evolution of the environmental conditions that determine future accessibility \citep{shafiee2019maintenance}. 

This interaction creates a dynamic tradeoff that is absent when turbine is always accessible. Consider a moderately degraded turbine that is currently accessible. If accessibility is likely to worsen and another maintenance opportunity may not become available for some time, it may be optimal to intervene immediately. In contrast, if favorable environmental conditions are likely to persist or return soon, continued operation may be preferred because another maintenance opportunity is expected to exist. The condition level at which preventive maintenance becomes desirable therefore need not be fixed: it can depend on the environmental state that governs both current accessibility and the likelihood of future maintenance opportunities.

Prior wind turbine maintenance studies have considered weather effects in several ways. \cite{byon2010optimal} model adverse weather through stochastic maintenance disruptions that delay intervention and increase downtime-related losses, while \cite{byon2010season} allow operating and maintenance conditions to vary across seasons. \cite{zhang2019opportunistic} consider weather-related maintenance waiting times and the associated production losses. Other studies incorporate broader effects of weather on turbine operation, including degradation, accessibility, and power production \citep{zheng2020optimal,zhu2019dynamic}, while related reliability models consider external environments that directly alter degradation rates \citep{ulukus2012optimal,zhang2013optimal}. These studies demonstrate several important ways in which environmental conditions affect maintenance systems. Our focus is different: we isolate accessibility as a stochastic process and examine how its evolution changes the condition level at which intervention becomes optimal.

Offshore wind also illustrates an important way in which stochastic
accessibility can arise. Environmental conditions evolve both seasonally and randomly. Some parts of the year provide substantially more favorable
accessibility than others, yet considerable uncertainty remains within each season. Consequently, the value of a given environmental state depends not only on whether maintenance can be performed now, but also on where the system lies within the seasonal accessibility cycle and how likely favorable conditions are to follow. The same turbine condition and the same accessibility status may therefore lead to different maintenance decisions when the outlook for future opportunities differs. This motivates the broader question studied in this paper: How should CBM decisions adapt when maintenance opportunities evolve stochastically over time?

We consider a degrading asset that evolves from an as-good-as-new condition through progressively degraded states and eventually to failure. At each decision epoch, the asset condition and the current environmental state are observed. Preventive maintenance may be performed before failure, while corrective maintenance is performed after failure; both interventions restore the asset to an as-good-as-new condition. Maintenance, however, can be performed only when the environmental state permits access. If the asset is inaccessible, intervention must be postponed while its degradation process continues to evolve. Because the environmental state also evolves stochastically, the current state contains information about the likelihood of future maintenance opportunities. The objective is to determine when to intervene so as to minimize long-run average cost.

We formulate the maintenance problem as a finite-state Markov decision process under a long-run average cost criterion. We characterize the structure of the optimal maintenance policy and analyze how maintenance decisions respond to changing accessibility. Our analysis shows that optimal maintenance decisions depend jointly on (i) the asset's current condition, (ii) the stochastic evolution of degradation, (iii) the current accessibility state, and (iv) the evolution of future maintenance opportunities.

The contributions of this paper are threefold:

\begin{itemize}
\item We extend the standard CBM framework by explicitly modeling accessibility as a separate stochastic process. To our knowledge, previous studies have not established the resulting optimal threshold structure when accessibility evolves as a distinct stochastic state. Asset degradation determines the physical need for intervention, while an exogenous environmental process determines whether maintenance can be performed. Modeling these processes jointly allows maintenance decisions to account for both degradation and the uncertainty of future maintenance opportunities.

\item We characterize the structure of optimal maintenance decisions under stochastic accessibility. We establish an accessibility-state-dependent threshold policy: for each state in which maintenance is feasible, there exists a condition threshold above which intervention is optimal. Hence, stochastic accessibility does not eliminate the familiar threshold structure of CBM, but makes the optimal threshold contingent on the current state of the accessibility process and its transition dynamics.

\item We quantify the value of adapting maintenance decisions to changing environmental conditions through a numerical study motivated by offshore wind. To identify this value, we benchmark the optimal policy against (i) an optimized condition-based policy that uses a single threshold and (ii) an optimized age-based policy both evaluated under the same stochastic accessibility process. We then examine how accessibility dynamics, degradation characteristics, and economic parameters affect the optimal thresholds. 
\end{itemize}

The remainder of the paper is organized as follows. Section~\ref{liter} reviews the related literature. Section~\ref{model} presents the maintenance model, and Section~\ref{struc_res} establishes the structural results under the long-run average cost criterion. Section~\ref{numeric} presents the numerical study, and Section~\ref{conclusion} concludes the paper. 

\section{Related Literature}\label{liter}

Maintenance optimization studies how inspection, maintenance, and replacement decisions should be made over time for degrading systems. Classical maintenance models focus on the timing of preventive or corrective interventions under stochastic degradation and failure, with objectives such as minimizing long-run average cost, discounted cost, downtime, or failure risk \citep{barlow1960optimum}. Early surveys by \cite{mccall1965maintenance}, \cite{pierskalla1976survey}, and \cite{sherif1981optimal} established the foundations of maintenance modeling for degrading systems. Subsequent research has examined the implementation of predictive maintenance, the integration of maintenance with production planning, and broader optimization-based approaches \citep{mckone2002guidelines,batun2012reassessing,
deJongeScarf2020review}. 

Within this broad literature, a substantial body of work considers CBM, in which information about the current condition of the asset is used to guide maintenance decisions \citep{jardine2006review,panagiotidou2010statistical,alaswad2017review}. Instead of replacing an asset solely according to its age or a fixed schedule, CBM relies on the state of the system, inferred from signals, measurements, or inspections to determine whether intervention is economically justified. 

Markov decision processes (MDPs) provide a natural framework for CBM optimization by linking maintenance decisions to the current condition of an asset, stochastic degradation, and future costs \citep{puterman2014markov}. Classical studies establish that, under suitable degradation and cost assumptions, optimal replacement decisions can be characterized by threshold policies \citep{kolesar1966minimum}. In particular, stochastic monotonicity of the degradation process, combined with appropriate cost conditions, can lead to a threshold structure: the asset continues operating while its condition is sufficiently good and is maintained or replaced once degradation reaches a critical level \citep{shahri2026data}. Such policies are intuitive and straightforward to implement, since maintenance decisions can be made by comparing the observed condition with a prescribed threshold.

Subsequent research extends this structural perspective to degrading systems with additional operational complexities. \cite{sloan2000combined} jointly optimize production and maintenance scheduling and examine structural properties of the optimal policy. \cite{elwany2011structured} and \cite{si2018optimal} establish control-limit replacement policies based on observed degradation signals, while \cite{liu2017condition} show that optimal maintenance thresholds can vary with system age when operating costs depend on both age and condition. \cite{drent2024condition} investigate production control in stochastically degrading systems with scheduled maintenance and establish monotonic properties of the optimal production policy. Similarly, \cite{wang2025learning} characterize optimal workload decisions under partially observed degradation and uncertainty about the relationship between workload and degradation. Together, these studies demonstrate how structural policy analysis can accommodate additional operational decisions, cost characteristics, and information limitations.

Another line of research examines how external conditions affect maintenance decisions. Existing studies can be grouped conceptually into two broad settings: those in which external conditions affect the degradation process, and those in which external conditions restrict when maintenance can be conducted or influence the consequences of delaying intervention.

In the first line of work, external conditions influence the asset degradation process. Reliability models have long recognized that operating and environmental conditions can affect degradation rates and failure behavior \citep{alaswad2017review, liang2023reliability}. These models assume that degradation results from asset's operation, but the rate of damage accumulation or failure risk depends on the operating environment. Thus, the environment affects the asset side of the problem: different environmental states can imply different degradation dynamics and, consequently, different maintenance or replacement thresholds. For example, \cite{ulukus2012optimal} address systems operating in randomly changing environments, where the current environmental condition determines how quickly degradation accumulates and leads to environment-specific replacement thresholds. \cite{zhang2013optimal} consider multi-component systems exposed to changing external conditions, where harsher environments accelerate component degradation and create stronger incentives for opportunistic maintenance. 

The second line of work considers external conditions that constrain maintenance opportunities. This is particularly relevant in offshore wind operations, where wind and sea conditions determine whether turbines can be accessed and maintenance activities can be performed \citep{ren2021offshore, martini2017accessibility}. \cite{byon2010optimal} study wind turbine maintenance under stochastic weather conditions, modeling adverse weather through exogenous probabilities that delays maintenance and increases downtime-related losses. In their setting, uncertainty in maintenance opportunities is represented through fixed disruption probabilities rather than through an explicitly evolving accessibility process. \cite{byon2010season} extend this setting by allowing operating and maintenance conditions to differ across seasons. They capture the seasonal variation in maintenance conditions, but do not represent accessibility as a Markov state whose evolution depends on the currently observed accessibility condition. \cite{byon2013wind} further considers wind-turbine maintenance decisions under uncertain weather conditions and develops tractable decision rules that account for weather-related maintenance delays and downtime costs. However, changing accessibility is not represented as an explicit stochastic process that directly governs the availability of maintenance opportunities over time. \cite{zhang2019opportunistic} consider stochastic weather restrictions by generating wind-speed trajectories and estimating the resulting maintenance waiting time, with the opportunistic maintenance threshold adjusted according to wind speed. Weather therefore affects maintenance duration and associated production losses rather than determining whether maintenance can be conducted in a given period. 

Collectively, these studies demonstrate how external conditions affect degradation, maintenance restrictions, delays, and production losses. Our study extends this literature by treating accessibility as a distinct stochastic process, separate from asset degradation, and examining its implications for optimal maintenance decisions. In particular, we establish how the stochastic evolution of accessibility modifies the classical CBM threshold structure under a long-run average cost criterion.

\section{Model}\label{model}
\subsection{Problem Setting and Sequence of Events}

We consider degrading asset that is reviewed at discrete decision epochs over an infinite horizon. A discrete-time formulation is appropriate because asset condition and accessibility information are typically assessed at regular planning intervals, such as days or weeks, and maintenance decisions are made based on the information available at these epochs. Offshore wind turbines provide the motivating application, where accessibility is governed by changing weather and sea conditions. 

At the beginning of each period, the asset condition and the prevailing accessibility condition are observed. The asset condition ranges from as-good-as-new through progressively degraded states to failure. The asset can be maintained preventively while it is functional or correctively after it has failed. If maintenance is not performed, a functional asset continues operating, and degrades over time, while a failed asset remains non-operational. The accessibility condition determines whether preventive or corrective maintenance can be commenced. 

Costs are incurred according to the state and the action taken in a period. Continuing operation without maintenance incurs an efficiency loss that depends on both the asset condition and the prevailing accessibility condition. This captures the economic consequences of degradation as well as differences in the production potential and hence generated revenue across accessibility conditions. Preventive and corrective maintenance incur fixed intervention costs that may also depend on the accessibility condition in which maintenance is commenced. These costs reflect both the direct maintenance expenditure and the revenue losses incurred while production is halted during maintenance operations, which may differ across accessibility conditions. Regardless of the accessibility condition, corrective maintenance is more costly than preventive maintenance.

At the end of each period, the system moves to the next decision epoch. If no maintenance is performed, 
asset condition evolves according to a stochastic degradation process. If maintenance is performed, the asset is restored to the as-good-as-new condition. Meanwhile, accessibility evolves independently as an exogenous stochastic process. In the offshore wind application, this accessibility process is induced by the evolution of weather and sea conditions and may exhibit both stochastic persistence and seasonal variation. The sequence of events in each period is as follows: 
\begin{enumerate*}[label=(\arabic*)]
    \item the asset and accessibility conditions are observed,
    \item the maintenance decision is made,
    \item costs are incurred, and
    \item the system evolves to the next state.
\end{enumerate*}
The process repeats indefinitely. The objective is to determine a maintenance policy that minimizes the long-run average cost.

\subsection{MDP Formulation}\label{mdp_form}

\subsubsection{State and Action Space}

Let $\mathcal{X}=\{0,1,\ldots,K\}$ denote the finite set of asset condition states, where \(x=0\) represents a new asset, state \(x=K\) represents a failed asset, and \(x=1,\ldots,K-1\) represent progressively degraded functional states. 
Similarly, let $\mathcal{W}=\{0,1,\ldots,W\}$ denote the finite set of accessibility states.
The system state is therefore defined as $s=(w,x)\in\mathcal{S}:=\mathcal{W}\times\mathcal{X}.$

Let $\mathcal{W}^{A}\subseteq\mathcal{W}$ denote the set of accessibility states in which the asset is accessible for maintenance. 
In each period with an accessibility state $w\in \mathcal{W}^{A}$, an action 
\[
a=
\begin{cases}
0 & \text{continue without maintenance}\\
1 & \text{perform maintenance}
\end{cases}
\]
is selected.
Hence, the state-dependent action set reads
\[
\mathcal{A}(w,x)=
\begin{cases}
\{0,1\}, & w\in\mathcal{W}^{A},\ x\in\{1,\ldots,K\},\\
\{0\}, & w\notin\mathcal{W}^{A},\ x\in\{0,\ldots,K\}.
\end{cases}
\]

\subsubsection{Transition Dynamics}

The system evolves as a discrete-time Markov decision process.
Let $\{(w_t,x_t)\}_{t\geq 0}$ denote the controlled state process, where \(t\) indexes the decision periods.
The accessibility state follows the transition matrix $P^W=(p^W_{wu})_{w,u\in\mathcal{W}},$ where $p^W_{wu}=\mathbb{P}(w_{t+1}=u\mid w_t=w).$ 
If maintenance is not commenced, asset condition follows the transition matrix $P^X=(p^X_{xy})_{x,y\in\mathcal{X}},$ where $p^X_{xy}=\mathbb{P}(x_{t+1}=y\mid x_t=x,a_t=0).$ 
If maintenance is commenced, asset condition defaults to state $x_{t+1}=0$.  
The asset degradation process and accessibility process are independent. 
Hence, the controlled transition probabilities are as follows:
\[
P((u,y)\mid(w,x),a)
=
\begin{cases}
p^W_{wu}p^X_{xy}, & a=0,\\[1mm]
p^W_{wu}\mathbf{1}_{\{y=0\}}, & a=1.
\end{cases}
\]

\subsubsection{Cost Structure}

Let \(g(w,x)\) denote the single-period efficiency loss incurred when no maintenance is performed in state \((w,x)\), where $0\leq g(w,x)<\infty$ for 
$(w,x)\in\mathcal{S}$.
For functional states $x\in\{0,1,\ldots,K-1\}$, it captures the economic consequences of continued operation, such as condition-dependent efficiency loss, reduced production, or other degradation-related penalties. For the failed state \(x=K\), it represents the economic consequence of asset failure and downtime. Its dependence on \(w\) allows these consequences to vary with prevailing accessibility conditions.

For \(w\in\mathcal{W}^{A}\), let \(c^{\text{PM}}(w)\) and \(c^{\text{CM}}(w)\) denote the total costs of preventive and corrective maintenance, respectively. These include the direct intervention costs as well as maintenance-related production losses, which depend on the accessibility. They satisfy
\[
0<c^{\text{PM}}(w) \leq c^{\text{CM}}(w)<\infty.
\]

The single-period cost is therefore
\[
c((w,x),a)
=
\begin{cases}
g(w,x), & a=0,\\
c^{\text{PM}}(w), & a=1,\ x\in\{1,\ldots,K-1\},\\
c^{\text{CM}}(w), & a=1,\ x=K.
\end{cases}
\]

\subsection{Long-Run Average Cost and Recurrence}
\label{subsec:lrac_recurrence}

A stationary deterministic policy is a mapping $\pi:\mathcal{S}\rightarrow\{0,1\}$ such that
\[
\pi(w,x)\in\mathcal{A}(w,x),
\qquad
(w,x)\in\mathcal{S}.
\]
For a policy \(\pi\), define the long-run average cost starting from state \((w,x)\) as
\[
J^\pi(w,x)
=
\limsup_{T\rightarrow\infty}
\frac{1}{T}
\mathbb{E}^{\pi}_{(w,x)}
\left[
\sum_{t=0}^{T-1}
c\big((w_t,x_t),\pi(w_t,x_t)\big)
\right].
\]

Because the problem is defined over an infinite horizon, we require the controlled process to have a common long-run recurrent behavior rather than separate recurrent regimes determined by the initial state. The following conditions ensure recurrent accessibility  to maintenance and a positive-probability path toward failure when maintenance is continually postponed.

\begin{assumption}[Recurrence conditions]
\label{ass:unichain}
The following conditions hold:
\begin{enumerate}[label=\Alph*.]

    \item
    The accessibility transition matrix \(P^W\) is irreducible on
    \(\mathcal{W}\).

    \item
    The set of accessible states is nonempty:
    $\mathcal{W}^{A}\neq\emptyset.$

    \item
    The failed state is absorbing under continued operation,
    $p^X_{KK}=1,$ 
    
    and for every \(x<K\), there exists \(y>x\) such that $p^X_{xy}>0.$

    \item
    A new asset can remain in the new condition for one period with positive probability: $p^X_{00}>0.$

\end{enumerate}
\end{assumption}

Assumption~\ref{ass:unichain}A prevents the accessibility process from decomposing into disconnected closed classes. Together with Assumption~\ref{ass:unichain}B, it ensures that accessible conditions can be reached from any current accessibility state. Assumption~\ref{ass:unichain}C provides a positive-probability degradation path from every non-failed state to failure if maintenance is continually postponed. Assumption~\ref{ass:unichain}D allows the asset to remain in the as-good-as-new state while accessibility evolves, which provides a common post-maintenance state that can be reached under every stationary policy. Together with the maintenance-reset mechanism, these properties imply the required unichain structure.

\begin{proposition}
\label{prop:unichain}
Under Assumption~\ref{ass:unichain}, every stationary deterministic policy considered in this paper induces a unichain Markov chain on \(\mathcal{S}\).
\end{proposition}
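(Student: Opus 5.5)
The plan is to exhibit a single state that is reachable, with positive probability, from every state of $\mathcal{S}$ under a fixed stationary deterministic policy $\pi$. The standard fact then gives the result: if a state $s^\ast$ is accessible from every state, every closed recurrent class must contain $s^\ast$, and two distinct closed classes cannot share a state. Hence there is exactly one recurrent class, all remaining states are transient, and the chain is unichain. As the target I take $s^\ast=(w_0,0)$ for an arbitrary fixed $w_0\in\mathcal{W}$.

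The reachability argument has three steps.
\begin{enumerate*}[label=(\roman*)]
\item From any $(w,x)$, follow the chain. Whenever action $1$ is taken, the condition jumps to $0$. Whenever action $0$ is taken at $x<K$, Assumption~\ref{ass:unichain}C gives a strictly larger condition with positive probability. Hence within at most $K$ steps the chain reaches, with positive probability, either some state with $x=0$ or some state with $x=K$.
\item Suppose the chain is in $(w,K)$ and takes action $0$. Then it stays at $x=K$ because $p^X_{KK}=1$, while $w_t$ evolves by $P^W$ regardless of the action. By irreducibility (Assumption~\ref{ass:unichain}A) and $\mathcal{W}^A\neq\emptyset$ (Assumption~\ref{ass:unichain}B), some $(w',K)$ with $w'\in\mathcal{W}^A$ is reached with positive probability. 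Corrective maintenance there sends the chain to $(u,0)$ for some $u$ with $p^W_{w'u}>0$.
\item Suppose the chain is in some $(u,0)$. The only admissible action is $0$, since $\mathcal{A}(w,0)=\{0\}$. By Assumption~\ref{ass:unichain}D the chain remains at $x=0$ for any number of steps with probability at least $(p^X_{00})^n>0$, while irreducibility of $P^W$ carries $u$ to $w_0$ along a positive-probability path. Multiplying these probabilities, $(w_0,0)$ is reached.
\end{enumerate*}

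The main obstacle is step (ii). It requires $\pi(w',K)=1$ for accessible $w'$, i.e., the policy actually performs corrective maintenance when a failed asset is accessible. Without this, the claim fails for literally every deterministic policy. Take $\mathcal{W}^A=\mathcal{W}$, $p^X_{0K}=0$, and a policy with $\pi(w,x)=1$ for $1\le x\le K-1$ and $\pi(w,K)=0$. Then $\{(w,K):w\in\mathcal{W}\}$ is closed, and the states with $x<K$ also contain a separate closed class, so there are two recurrent classes. I would therefore make precise that the ``policies considered in this paper'' are those that perform corrective maintenance whenever the asset is failed and accessible. This is natural, since leaving a failed asset unrepaired forever only accrues $g(w,K)$ and never restores the system, and it is presumably what the paper intends. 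Under this restriction, steps (i)–(iii) combine, via Chapman–Kolmogorov, into a positive-probability path from every state to $(w_0,0)$, completing the proof.
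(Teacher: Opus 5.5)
Your proof is correct and follows essentially the paper's route: from any state the chain either resets or reaches failure. Irreducibility of $P^W$ together with $\mathcal{W}^{A}\neq\emptyset$ then brings a failed asset to an accessible state, where corrective maintenance resets it, so everything is linked through the post-maintenance states $(u,0)$. The restriction you flag is exactly the paper's implicit convention that corrective maintenance is mandatory at $(w,K)$ for $w\in\mathcal{W}^{A}$, asserted in its proof and built into its optimality equation for the failed state. Your explicit common target $(w_0,0)$, reached via Assumption~\ref{ass:unichain}D and irreducibility, makes rigorous what the paper only states as ``connected through the post-maintenance state structure.''
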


\begin{proof}{Proof.}
The accessibility process is irreducible, and by Assumptions~\ref{ass:unichain}A--B, an accessible state is reachable from every accessibility state. Under continued operation, Assumption~\ref{ass:unichain}C implies that the asset can eventually reach the failed state. Once the asset is failed and accessibility permits maintenance, corrective maintenance is mandatory and resets the asset condition to \(x=0\). Hence, all recurrent behavior is connected through the post-maintenance state structure, so the induced Markov chain has a single recurrent class, possibly together with transient states.
\end{proof}

The proposition ensures that the long-run behavior generated by a stationary policy is not determined by the initial state, see \cite[Prop.~8.2.1]{puterman2014markov}. Consequently, the long-run average cost under a fixed stationary deterministic policy is constant over \(\mathcal{S}\), and we write
\[
\eta_\pi
:=
J^\pi(w,x),
\qquad
(w,x)\in\mathcal{S}.
\]
The optimization problem is therefore
\[
\eta^*
=
\inf_{\pi}\eta_\pi,
\]
and an optimal policy satisfies
\[
\pi^*
\in
\arg\min_{\pi}\eta_\pi.
\]

For finite-state, finite-action unichain MDPs, the average-cost optimality equation admits a scalar optimal average cost and a relative value function, and a stationary deterministic selector attaining its minimum is average-cost optimal; see \cite[Thm.~8.4.5]{puterman2014markov}.

\section{Structural Results under long-run average cost}
\label{struc_res}

The finite-state MDP can be solved directly to determine an optimal action for every combination of asset condition and accessibility. Beyond computing these state-by-state decisions, we are interested in whether they exhibit a systematic structure. In particular, we ask whether worsening asset condition eventually makes maintenance preferable to continued operation and whether this decision can be represented by a condition threshold for each accessible state.

To establish such a structure, the ordering of the asset condition states must also be reflected in their future evolution and economic consequences. The conditions required for this analysis are introduced next.

\subsection{Conditions for Structural Analysis}
\label{cond}

First, degradation should preserve the physical ordering of asset condition states: an asset that is currently in a worse condition should not have systematically better future condition than one that is currently in a better condition.

\begin{assumption}[Monotone degradation]
\label{ass:monotone_asset_transition}
Under continued operation,
\[
p^X_{xy}=0,
\qquad
y<x,
\]
and \(P^X\) is stochastically monotone. That is, for every
\(x_2\geq x_1\) and \(k\in\mathcal{X}\),
\[
\sum_{y=k}^{K}p^X_{x_2y}
\geq
\sum_{y=k}^{K}p^X_{x_1y}.
\]
\end{assumption}

The first part of Assumption~\ref{ass:monotone_asset_transition} rules out spontaneous improvement in asset condition when no maintenance is performed. The stochastic-monotonicity condition is weaker than requiring degradation in every period: the asset may remain in its current condition, but a worse current state leads to a stochastically worse distribution of future condition states. Second, the economic consequence of continued operation should preserve the same ordering.

\begin{assumption}[Monotone efficiency loss]
\label{ass:monotone_operating_cost}
For each \(w\in\mathcal{W}\),
\[
g(w,x+1)\geq g(w,x),
\qquad
x=0,\ldots,K-1.
\]
\end{assumption}

Assumption~\ref{ass:monotone_operating_cost} states that degradation does not make continued operation economically more attractive. A higher condition state may represent lower efficiency, greater production loss, increased exposure to failure, or, at \(x=K\), complete asset unavailability. Finally, corrective maintenance should not be less expensive than preventive maintenance.

\begin{assumption}[Maintenance-cost ordering]
\label{ass:maintenance_cost_order}
For every \(w\in\mathcal{W}^{A}\),
\[
c^{\text{PM}}(w)\leq c^{\text{CM}}(w).
\]
\end{assumption}

This ordering reflects the additional consequences associated with failure and is used to preserve the ordering between the last degraded state and the failed state in the structural analysis.

\subsection{Average-Cost Optimality Equation}
\label{acoe}

Proposition~\ref{prop:unichain}, together with the finiteness of the state and action spaces, allows the maintenance problem to be characterized through the average-cost optimality equation. There exist a scalar \(\rho^*\) and a bias function $h:\mathcal{S}\rightarrow\mathbb{R}$ such that
\[
\rho^*+h(w,x)
=
\min_{a\in\mathcal{A}(w,x)}
\left\{
c((w,x),a)
+
\sum_{(u,y)\in\mathcal{S}}
P((u,y)\mid(w,x),a)h(u,y)
\right\}.
\]
Moreover, $\rho^*=\eta^*.$ To study the structure of the minimizing action, define the continuation value associated with waiting as
\[
Q^0(w,x)
=
g(w,x)
+
\sum_{u\in\mathcal{W}}
\sum_{y\in\mathcal{X}}
p^W_{wu}p^X_{xy}h(u,y).
\]
For preventive maintenance, define
\[
Q^P(w)
=
c^{\text{PM}}(w)
+
\sum_{u\in\mathcal{W}}
p^W_{wu}h(u,0),
\]
and for corrective maintenance,
\[
Q^C(w)
=
c^{\text{CM}}(w)
+
\sum_{u\in\mathcal{W}}
p^W_{wu}h(u,0).
\]

The optimality equation for \(x=0\) can therefore be written as
\[
\rho^*+h(w,0)
=
Q^0(w,0),
\qquad
w\in\mathcal{W},
\]
and, for \(x=1,\ldots,K-1\),
\[
\rho^*+h(w,x)
=
\begin{cases}
\min\{Q^0(w,x),Q^P(w)\},
    & w\in\mathcal{W}^{A},\\[1mm]
Q^0(w,x),
    & w\notin\mathcal{W}^{A}.
\end{cases}
\]
For the failed state,
\[
\rho^*+h(w,K)
=
\begin{cases}
Q^C(w),
    & w\in\mathcal{W}^{A},\\[1mm]
Q^0(w,K),
    & w\notin\mathcal{W}^{A}.
\end{cases}
\]

Thus, when the asset is degraded but functional and accessible, preventive maintenance is optimal whenever
\[
Q^P(w)\leq Q^0(w,x).
\]
The preventive-maintenance value \(Q^P(w)\) does not depend on the current degradation state because maintenance always restores the asset to the same new state. Consequently, if \(Q^0(w,x)\) can be shown to increase with degradation, the maintenance region must inherit an ordered structure. The next result establishes the required monotonicity.

\subsection{Monotonicity of the Bias Function}
\label{monoton}

The key structural step is to establish that a worse asset condition cannot have a lower relative long-run cost. Intuitively, degradation increases the current efficiency loss and leads to stochastically worse future conditions, whereas maintenance restores all degraded states to the same new condition. The following lemma formalizes this ordering.

\begin{lemma}
\label{lem:bias_monotone_x}
Under Assumptions~\ref{ass:unichain},
\ref{ass:monotone_asset_transition},
\ref{ass:monotone_operating_cost}, and
\ref{ass:maintenance_cost_order},
there exists a solution \((\rho^*,v)\) to the average-cost optimality equation such that
\[
v(w,x+1)\geq v(w,x),
\qquad
w\in\mathcal{W},
\quad
x=1,\ldots,K-1.
\]
\end{lemma}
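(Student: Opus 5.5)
The plan is to prove the monotonicity for the discounted-cost version of the problem by value iteration, and then pass to the average-cost criterion through the vanishing-discount approach. For $\beta\in(0,1)$, let $v_n^\beta$ be the $n$-step discounted value iterates with $v_0^\beta\equiv 0$:
\[
v_{n+1}^\beta(w,x)=\min_{a\in\mathcal{A}(w,x)}\Big\{c((w,x),a)+\beta\sum_{(u,y)}P((u,y)\mid(w,x),a)\,v_n^\beta(u,y)\Big\}.
\]
I will show by induction on $n$ that $v_n^\beta(w,\cdot)$ is nondecreasing on $\{1,\ldots,K\}$ for every $w$. Since $v_n^\beta\to v^\beta$ pointwise, the discounted value function inherits this property.

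The monotonicity is claimed only for $x\ge 1$, and the induction should respect that. The comparison between $x=0$ and $x=1$ can fail, because at $x=1$ an accessible asset may be maintained while at $x=0$ it cannot. The induction step therefore only ever compares conditions $x\ge1$. By Assumption~\ref{ass:monotone_asset_transition}, a row $x\ge1$ of $P^X$ puts no mass on $y=0$. For $1\le x_1\le x_2\le K$ and any $u$, the function $y\mapsto v_n^\beta(u,y)$ is nondecreasing on $\{1,\ldots,K\}$ by the induction hypothesis. The tail-sum dominance $\sum_{y\ge k}p^X_{x_2y}\ge\sum_{y\ge k}p^X_{x_1y}$ for all $k\ge1$ then gives
\[
\sum_y p^X_{x_2y}v_n^\beta(u,y)\ge\sum_y p^X_{x_1y}v_n^\beta(u,y).
\]
This follows by the usual summation-by-parts representation of a nondecreasing function as a sum of indicators of upper sets $\{y\ge k\}$, $k\ge1$. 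Combined with Assumption~\ref{ass:monotone_operating_cost}, the continuation term $Q^0_n(w,x)$ is nondecreasing in $x\in\{1,\ldots,K\}$.

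The induction step then splits into cases.
\begin{itemize}
\item If $w\notin\mathcal{W}^A$, then $v_{n+1}^\beta(w,x)=Q^0_n(w,x)$ for every $x$, which is monotone.
\item If $w\in\mathcal{W}^A$ and $1\le x<x+1\le K-1$, then $v_{n+1}^\beta(w,\cdot)=\min\{Q^0_n(w,\cdot),Q^P_n(w)\}$. This is the minimum of a nondecreasing function and a constant, hence nondecreasing.
\item If $w\in\mathcal{W}^A$ and we compare $x=K-1$ with $x=K$, then
\[
v_{n+1}^\beta(w,K)=Q^C_n(w)\ge Q^P_n(w)\ge\min\{Q^0_n(w,K-1),Q^P_n(w)\}=v_{n+1}^\beta(w,K-1).
\]
The first inequality uses Assumption~\ref{ass:maintenance_cost_order}, since $Q^C_n$ and $Q^P_n$ share the same reset term.
\end{itemize}
This completes the induction.

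To pass to the average-cost criterion, I fix a reference state $s_0=(w_0,0)$ and set $h_\beta=v^\beta-v^\beta(s_0)$, which inherits the monotonicity in $x\ge1$. By Proposition~\ref{prop:unichain}, every stationary deterministic policy is unichain, and the state and action spaces are finite. The standard vanishing-discount results (e.g., \cite[Sec.~8.10]{puterman2014markov}) then give that $(1-\beta)v^\beta(s_0)$ converges to $\rho^*$ and that $\{h_\beta\}$ is uniformly bounded. Hence along some sequence $\beta_k\uparrow1$ the limit $h_{\beta_k}\to v$ exists, and $(\rho^*,v)$ solves the average-cost optimality equation. Weak inequalities survive pointwise limits, so $v(w,x+1)\ge v(w,x)$ for $x=1,\ldots,K-1$.

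I expect the main obstacle to be this last step. One must invoke the uniform boundedness of the relative discounted values carefully, since it is what requires the unichain property of every stationary policy rather than only of the optimal one. A cleaner fallback would be relative value iteration, whose convergence would need aperiodicity. Assumption~\ref{ass:unichain}D, $p^X_{00}>0$, suggests aperiodicity is intended, but the vanishing-discount route avoids having to verify it.
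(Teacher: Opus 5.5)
Your proposal is correct and follows essentially the same route as the paper. The paper also runs discounted value iteration from $v_0\equiv 0$ with an induction that preserves monotonicity on $\{1,\ldots,K\}$, using the PM/CM cost ordering at $x=K$ for accessible states, and then passes to a vanishing-discount limit with a reference state along $\beta_m\uparrow 1$. Your remark that rows $x\ge 1$ of $P^X$ put no mass on $y=0$, so monotonicity on $\{1,\ldots,K\}$ suffices for the stochastic-dominance step, makes explicit a point the paper leaves implicit.
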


\begin{proof}{Proof.}
For \(\beta\in(0,1)\), define
\[
(T_\beta v)(w,x)
=
\min_{a\in\mathcal{A}(w,x)}
\left\{
c((w,x),a)
+
\beta
\sum_{(u,y)\in\mathcal{S}}
P((u,y)\mid(w,x),a)v(u,y)
\right\}.
\]
Let
\[
v_0\equiv0,
\qquad
v_{n+1}=T_\beta v_n.
\]

Suppose
\[
v_n(w,x+1)\geq v_n(w,x),
\qquad
x=1,\ldots,K-1.
\]
For \(1\leq x_1\leq x_2\leq K-1\), Assumption~\ref{ass:monotone_asset_transition} gives
\[
\sum_{y\in\mathcal{X}}
p^X_{x_2y}v_n(u,y)
\geq
\sum_{y\in\mathcal{X}}
p^X_{x_1y}v_n(u,y),
\qquad
u\in\mathcal{W}.
\]
Hence,
\[
\sum_{u\in\mathcal{W}}
\sum_{y\in\mathcal{X}}
p^W_{wu}p^X_{x_2y}v_n(u,y)
\geq
\sum_{u\in\mathcal{W}}
\sum_{y\in\mathcal{X}}
p^W_{wu}p^X_{x_1y}v_n(u,y).
\]
By Assumption~\ref{ass:monotone_operating_cost},
\[
g(w,x_2)
+
\beta
\sum_{u,y}
p^W_{wu}p^X_{x_2y}v_n(u,y)
\geq
g(w,x_1)
+
\beta
\sum_{u,y}
p^W_{wu}p^X_{x_1y}v_n(u,y).
\]

For \(w\in\mathcal{W}^{A}\), the preventive-maintenance term
\[
c^{\text{PM}}(w)
+
\beta
\sum_{u\in\mathcal{W}}
p^W_{wu}v_n(u,0)
\]
is independent of \(x\). Therefore,
\[
(T_\beta v_n)(w,x+1)
\geq
(T_\beta v_n)(w,x),
\qquad
x=1,\ldots,K-2.
\]
The same inequality follows directly from the waiting term when
\(w\notin\mathcal{W}^{A}\).

For \(w\in\mathcal{W}^{A}\),
\begin{align*}
(T_\beta v_n)(w,K)
&=
c^{\text{CM}}(w)
+
\beta
\sum_{u\in\mathcal{W}}
p^W_{wu}v_n(u,0)
\\
&\geq
c^{\text{PM}}(w)
+
\beta
\sum_{u\in\mathcal{W}}
p^W_{wu}v_n(u,0)
\\
&\geq
(T_\beta v_n)(w,K-1),
\end{align*}
where the first inequality follows from
Assumption~\ref{ass:maintenance_cost_order}. For
\(w\notin\mathcal{W}^{A}\), stochastic monotonicity gives
\[
(T_\beta v_n)(w,K)
\geq
(T_\beta v_n)(w,K-1).
\]

Thus \(T_\beta\) preserves monotonicity on
\(\{1,\ldots,K\}\). Since \(v_0\equiv0\), induction yields
\[
v_n(w,x+1)\geq v_n(w,x),
\qquad
x=1,\ldots,K-1.
\]
Discounted value iteration converges to the unique discounted value function \(V_\beta\), so
\[
V_\beta(w,x+1)\geq V_\beta(w,x).
\]

Let \(\beta_m\uparrow1\). By the finite-state unichain average-cost limit result, there exist a subsequence, a reference state \(s_0\), a scalar \(\rho^*\), and a bias function \(v\) such that
\[
(1-\beta_m)V_{\beta_m}(s_0)\rightarrow\rho^*
\]
and
\[
V_{\beta_m}(w,x)-V_{\beta_m}(s_0)
\rightarrow v(w,x).
\]
Taking limits preserves the inequalities, yielding
\[
v(w,x+1)\geq v(w,x),
\qquad
x=1,\ldots,K-1.
\]
\end{proof}

Lemma~\ref{lem:bias_monotone_x} establishes the ordering needed for the maintenance decision. A more degraded asset has a weakly larger relative long-run cost because continued operation combines a larger current efficiency loss with stochastically worse future degradation. Preventive maintenance behaves differently: regardless of the current degraded condition, it restores the asset to the same new state. This distinction is what produces the threshold structure established next.

\subsection{Optimality of Accessibility-Dependent Threshold Policies}
\label{optimal}

We can now determine the structure of the optimal maintenance decision. For a fixed accessible state, the value of preventive maintenance is independent of the current degraded condition, whereas the value of continued operation becomes less attractive as degradation progresses. Hence, once preventive maintenance becomes preferable to continued operation, it cannot become less preferable at a worse condition state.

\begin{theorem}
\label{thm:threshold_optimal_lrac}
Under Assumptions~\ref{ass:unichain},
\ref{ass:monotone_asset_transition},
\ref{ass:monotone_operating_cost}, and
\ref{ass:maintenance_cost_order},
there exists an long-run average cost-optimal stationary deterministic policy
\(\pi^*\) such that, for every
\(w\in\mathcal{W}^{A}\), there exists
\[
\theta(w)\in\{1,\ldots,K\}
\]
satisfying
\[
\pi^*(w,x)
=
\begin{cases}
0, & x<\theta(w),\\
1, & x\geq\theta(w),
\end{cases}
\qquad
x=1,\ldots,K.
\]
For \(w\notin\mathcal{W}^{A}\),
\[
\pi^*(w,x)=0,
\qquad
x\in\mathcal{X}.
\]
\end{theorem}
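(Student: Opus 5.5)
The plan is to read the threshold structure directly off the average-cost optimality equation, using the bias function supplied by Lemma~\ref{lem:bias_monotone_x}. Take the pair \((\rho^*,v)\) from that lemma, with \(v\) in place of \(h\) in the definitions of \(Q^0\), \(Q^P\) and \(Q^C\).

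\textbf{Step 1: monotonicity of the waiting value.} For each \(w\in\mathcal{W}\), I will show that \(x\mapsto Q^0(w,x)\) is nondecreasing on \(\{1,\ldots,K-1\}\). The argument mirrors the one in the lemma's proof.
\begin{itemize}
\item Write \(\bar v(u,y)\) for the bias function. Assumption~\ref{ass:monotone_asset_transition} and the upper-triangular structure give \(p^X_{xy}=0\) for \(y<x\). So for \(x\ge 1\), the sum \(\sum_y p^X_{xy}v(u,y)\) involves only \(y\ge 1\), where \(v(u,\cdot)\) is nondecreasing.
\item Stochastic monotonicity then yields \(\sum_y p^X_{x_2y}v(u,y)\ge \sum_y p^X_{x_1y}v(u,y)\) for \(1\le x_1\le x_2\le K-1\). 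This holds because both distributions are supported on \(\{1,\ldots,K\}\), and a nondecreasing function on that set is integrated against stochastically ordered measures.
\item Averaging over \(u\) with weights \(p^W_{wu}\) preserves the inequality. Adding \(g(w,x)\), which is monotone by Assumption~\ref{ass:monotone_operating_cost}, completes the step.
\end{itemize}

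\textbf{Step 2: defining the threshold.} Fix \(w\in\mathcal{W}^A\). Since \(Q^P(w)\) does not depend on \(x\), the set \(\{x\in\{1,\ldots,K-1\}: Q^P(w)\le Q^0(w,x)\}\) is an upper interval.
\begin{itemize}
\item Let \(\theta(w)\) be its minimum if the set is nonempty, and \(\theta(w)=K\) otherwise.
\item Define \(\pi^*(w,x)=1\) for \(\theta(w)\le x\le K-1\) and \(\pi^*(w,x)=0\) for \(1\le x<\theta(w)\). In other words, ties are broken toward maintenance.
\item At \(x=K\) with \(w\in\mathcal{W}^A\), set \(\pi^*(w,K)=1\). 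This is the corrective action, and the optimality equation as written prescribes it (the action set is effectively forced there). It is consistent with \(\theta(w)\le K\).
\item At \(x=0\) and for \(w\notin\mathcal{W}^A\), the only admissible action is \(0\).
\end{itemize}
This yields exactly the form stated in Theorem~\ref{thm:threshold_optimal_lrac}.

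\textbf{Step 3: optimality of the selector.} By construction, \(\pi^*\) attains the minimum in the optimality equation at every state. Proposition~\ref{prop:unichain} together with \cite[Thm.~8.4.5]{puterman2014markov} then implies that \(\pi^*\) is average-cost optimal with \(\eta_{\pi^*}=\rho^*=\eta^*\).

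\textbf{Main obstacle.} The delicate point is ensuring that the \(v\) delivered by the lemma genuinely solves the average-cost optimality equation, and that the selector built from it is optimal. This requires the vanishing-discount limit pair to satisfy the equation, which is standard for finite unichain MDPs, where the limit points of \(V_\beta-V_\beta(s_0)\) solve it. I would cite that fact rather than reprove it.

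A second subtlety is that monotonicity of \(v\) is only asserted for \(x\ge1\), not at \(x=0\). Step 1 must therefore rely on the no-improvement condition so that \(v(u,0)\) never enters \(Q^0(w,x)\) for \(x\ge1\). I would check this carefully. The case \(x=K\) needs no comparison, because corrective maintenance is the prescribed action there.
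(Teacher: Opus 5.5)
Your proposal is correct and follows the paper's proof essentially step for step. Both arguments use the same chain: monotonicity of $Q^0(w,\cdot)$ on $\{1,\ldots,K-1\}$ via Lemma~\ref{lem:bias_monotone_x}, the upper set $\{x: Q^P(w)\le Q^0(w,x)\}$ with $\theta(w)$ its minimum (or $K$ if empty), mandatory corrective maintenance at $x=K$, and average-cost optimality of a selector attaining the minimum in the optimality equation; your added observation that $p^X_{x0}=0$ for $x\ge 1$ keeps $v(u,0)$ out of $Q^0(w,x)$, so monotonicity of $v$ only on $\{1,\ldots,K\}$ suffices, is a detail the paper leaves implicit.
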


\begin{proof}{Proof.}
Fix \(w\in\mathcal{W}^{A}\) and define
\[
\mathcal{M}_w
=
\left\{
x\in\{1,\ldots,K-1\}:
Q^P(w)\leq Q^0(w,x)
\right\}.
\]

For \(1\leq x_1\leq x_2\leq K-1\), Lemma~\ref{lem:bias_monotone_x} and Assumptions~\ref{ass:monotone_asset_transition}--\ref{ass:monotone_operating_cost} imply
\[
Q^0(w,x_2)\geq Q^0(w,x_1).
\]
Since \(Q^P(w)\) is independent of \(x\), \(\mathcal{M}_w\) is an upper set.

If \(\mathcal{M}_w\neq\emptyset\), define
\[
\theta(w)=\min\mathcal{M}_w.
\]
If \(\mathcal{M}_w=\emptyset\), define
\[
\theta(w)=K.
\]
Since maintenance is mandatory at \(x=K\) for
\(w\in\mathcal{W}^{A}\), an optimal selector of the average-cost optimality equation satisfies
\[
\pi^*(w,x)
=
\begin{cases}
0, & x<\theta(w),\\
1, & x\geq\theta(w).
\end{cases}
\]

For \(w\notin\mathcal{W}^{A}\),
\[
\mathcal{A}(w,x)=\{0\},
\qquad
x\in\mathcal{X},
\]
and therefore
\[
\pi^*(w,x)=0.
\]
\end{proof}

Theorem~\ref{thm:threshold_optimal_lrac} shows that changing accessibility preserves the familiar threshold structure of CBM, but makes the maintenance decision dependent on accessibility. For each accessibility state in which maintenance can be conducted, the optimal decision is summarized by a single condition threshold. Below the threshold, continued operation is preferred; once the asset reaches the threshold or a worse condition, maintenance is preferred. If the threshold equals \(K\), preventive maintenance is never performed and intervention occurs only after failure.

The thresholds need not be identical across accessible states. Current accessibility affects both the immediate consequences of the maintenance decision and the distribution of future accessibility conditions. Thus, the threshold reflects the tradeoff between using the current maintenance opportunity and postponing intervention while facing uncertain future accessibility. In accessibility states where the asset is not accessible, no threshold decision is required because maintenance cannot be conducted regardless of asset condition.

\subsection{Threshold Policy Representation}
\label{thresh}

Theorem~\ref{thm:threshold_optimal_lrac} implies that an optimal stationary policy can be represented using one condition threshold for each accessible state rather than a separate action for every state in
\(\mathcal{S}\).

Let
\[
\tau=(\tau_w)_{w\in\mathcal{W}^{A}},
\]
where
\[
\tau_w\in\{1,\ldots,K\},
\qquad
w\in\mathcal{W}^{A}.
\]
Define the feasible threshold set
\[
\Theta
=
\prod_{w\in\mathcal{W}^{A}}
\{1,\ldots,K\}.
\]

For a given \(\tau\in\Theta\), define
\[
\pi_\tau(w,x)
=
\begin{cases}
1,
    & w\in\mathcal{W}^{A},\ x\geq\tau_w,\\
0,
    & \text{otherwise}.
\end{cases}
\]

The threshold values have a direct interpretation. If
\[
\tau_w=1,
\]
maintenance is performed whenever the asset is degraded or failed and accessibility permits. If
\[
\tau_w=K,
\]
preventive maintenance is never performed in accessibility state \(w\), and maintenance occurs only after failure. No threshold is assigned to
\(w\notin\mathcal{W}^{A}\), since the asset is not accessible for maintenance in those states.

Let \(\eta_\tau\) denote the long-run average cost induced by policy \(\pi_\tau\). The optimal threshold vector satisfies
\[
\tau^*
\in
\arg\min_{\tau\in\Theta}
\eta_\tau.
\]
Thus, the structural result reduces the representation of the optimal stationary maintenance policy from arbitrary state-action decisions to a collection of condition thresholds associated with the accessible states.

\section{Numerical Study}\label{numeric}

We conduct numerical experiments to investigate how stochastic and seasonal accessibility affects optimal CBM decisions and to quantify the economic value of adapting maintenance thresholds to changing accessibility. We use offshore wind as the motivating application, where accessibility depends on weather and sea conditions that exhibit both annual seasonality and period-to-period uncertainty. 

We first examine a base-case setting to illustrate how optimal maintenance thresholds change throughout the year and how their variation reflects the expected evolution of future accessibility. We then evaluate the economic benefits of accessibility-dependent thresholds by comparing the optimal policy with optimized constant-threshold and age-based policies, both subject to the same stochastic accessibility process. A common-path simulation further illustrates how the different policies translate condition and accessibility information into maintenance decisions. Finally, sensitivity analyses examine how accessibility, degradation, and economic characteristics affect the level and seasonal variation of the optimal thresholds.

\subsection{Experimental Setting}

We next specify the degradation, accessibility, and cost parameters used in the numerical experiments. These choices are specific to the computational study. 

\subsubsection{Degradation Process}

The degradation-process parameters are selected to generate a monotone and stochastic degradation pattern consistent with gradual degradation in offshore wind turbines. Stochastic processes such as Wiener, Gamma, and inverse Gaussian processes are commonly used to represent physical degradation in maintenance models \citep{sun2023robust}. We use a Gamma process because its nonnegative increments naturally represent monotone degradation. Specifically, if $Z(t)$ denotes cumulative degradation, then
\[
Z(t+1)-Z(t)\sim \operatorname{Gamma}(\kappa,\nu),
\]
where $\kappa>0$ is the shape parameter and $\nu>0$ is the scale parameter. The failure level is normalized to one and divided into $K$ equal intervals, corresponding to condition states $0,\ldots,K-1$, with state $K$ representing failure. Following \cite{de2019discretizing}, the Gamma increment distribution is used to calculate the probabilities of remaining in the current state or transitioning to one or more worse states, producing the degradation transition matrix $P^X$. Thus, degradation is monotone and failure is absorbing in the absence of maintenance.

Rather than specifying $\kappa$ and $\nu$ directly, we parameterize degradation using the mean time to failure $\mu$ and its standard deviation $\sigma$. For each $(\mu,\sigma)$ pair, we numerically select $(\kappa,\nu)$ so that the absorption time of the resulting discrete Markov chain has the specified mean and standard deviation. This allows degradation speed and uncertainty to be varied using interpretable lifetime characteristics while systematically generating the corresponding transition matrix. This Gamma-process specification is used only for the numerical experiments. 

\subsubsection{Accessibility Process}
\label{sec:accessibility_experiments}

We represent accessibility using a two-state, seasonally varying Markov process, where the turbine is either accessible (\emph{A}) or inaccessible (\emph{I}) for maintenance. The process is characterized by three parameters: the persistence parameter $\rho$, which controls the tendency of the current accessibility state to persist into the following week; the seasonality amplitude $\Delta$, which determines the strength of seasonal variation in accessibility; and the peak-accessibility week $\bar{t}$, which identifies the time of year when accessibility conditions are most favorable. This specification captures both the temporal dependence of offshore weather \citep{hagen2013multivariate} and its seasonal variation \citep{martini2017accessibility}. For calendar week $t$, the seasonal adjustment is
\[
s_t=\Delta\cos\left(\frac{2\pi(t-\bar{t})}{52}\right),
\]

The probability that the turbine is accessible in week $t+1$ depends on its accessibility state in week $t$:
\[
q_t^I=P(A_{t+1}\mid I_t) =[1-\rho+s_t],
\]
\[
q_t^A=P(A_{t+1}\mid A_t)=[\rho+s_t],
\]
where
\[
[z]=\min\{0.99,\max\{0.01,z\}\}
\]
bounds each probability between $0.01$ and $0.99$.

The resulting transition matrix is
\[
P_t^W=
\begin{pmatrix}
1-q_t^I & q_t^I\\
1-q_t^A & q_t^A
\end{pmatrix},
\]
where rows and columns correspond to $(I,A)$. When $s_t=0$, both accessibility states have a persistence probability of $\rho$. A positive seasonal adjustment increases the probability of accessibility in the following week, while a negative adjustment decreases it. The complementary probabilities ensure that each row sums to one, including when the probability bounds are applied.

\subsubsection{Parameterization}\label{sec:param}

Table~\ref{tab:base_case_parameters} summarizes the parameter values used in the base-case experiments. The maintenance-cost parameters are selected to reflect the well-established cost ordering between preventive and corrective interventions \citep{panagiotidou2010statistical}. Corrective maintenance after turbine failure generally requires greater repair resources, longer downtime, and potentially more expensive logistics than preventive intervention \citep{raza2019optimal,dao2021integrated}. Accordingly, we set the preventive and corrective maintenance costs to $c^{\mathrm{PM}}=\$150{,}000$ and $c^{\mathrm{CM}}=\$750{,}000$, respectively. These values are intended to preserve the economically relevant ordering and relative magnitude of the two interventions rather than reproduce a specific turbine's monetary costs.

\begin{table}[htbp]
\centering
\caption{Base-case parameter values}
\label{tab:base_case_parameters}
\renewcommand{\arraystretch}{1.2}
\begin{tabular}{lll}
\toprule
Parameter & Symbol & Value \\
\midrule
Number of condition states 
& $|\mathcal{X}|$ 
& 11 \\

Number of weeks in seasonal cycle 
& $T$ 
& 52 weeks\\

Preventive maintenance cost 
& $c^{\mathrm{PM}}$ 
& \$150,000 \\

Corrective maintenance cost 
& $c^{\mathrm{CM}}$ 
& \$750,000 \\

Mean time to failure 
& $\mu$ 
& 80 weeks\\

Standard deviation of time to failure 
& $\sigma$ 
& 35 weeks\\

Weather persistence 
& $\rho$ 
& 0.55 \\

Seasonality amplitude 
& $\Delta$ 
& 0.40 \\

Peak-accessibility week 
& $\bar{t}$ 
& 30 \\
\bottomrule
\end{tabular}

\vspace{1mm}
\begin{minipage}{0.95\linewidth}
\centering
\footnotesize
\textit{Note:} Monetary parameters are expressed in USD, and time parameters are expressed in weeks.
\end{minipage}

\end{table}

Continued operation in a degraded condition is associated with a state-dependent efficiency loss. Such degradation-dependent operating costs have been explicitly considered in CBM models, with operating costs increasing as the system degrades \citep{panagiotidou2010statistical,liu2017condition}, and consistent with empirical evidence that wind-asset performance and energy production decline with aging and degradation \citep{staffell2014does}. Accordingly, we use an increasing operating-cost profile $g(x)$, with a substantially larger cost assigned to the failed state $g(K)$ (see Figure \ref{fig:eff_loss}) to capture the economic consequences of turbine unavailability and lost production. 

\begin{figure}[htbp]
\includegraphics[width=0.5\columnwidth]{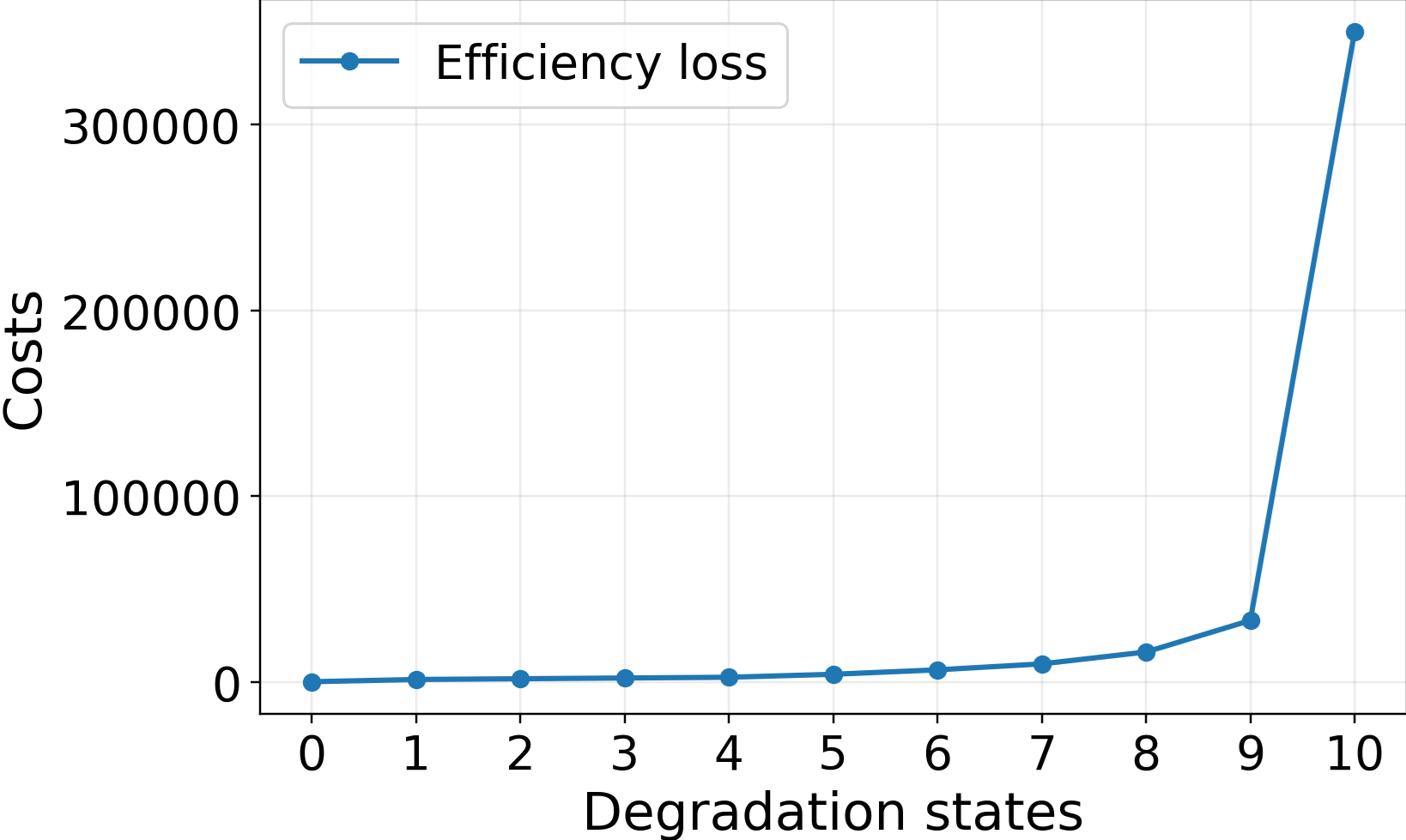}
\centering
\caption{Operating-cost profile.}
\label{fig:eff_loss}
\end{figure} 

The degradation-process parameters are selected to generate a monotone and stochastic degradation pattern consistent with the gradual loss of turbine condition observed in offshore wind applications. The asset is represented by 11 condition states. We set the mean time to failure to $\mu=80$ weeks, consistent with reported wind-turbine failure timescales \citep{li2022failure}, and the standard deviation to $\sigma=35$ weeks to capture the substantial uncertainty in degradation timing observed in wind-turbine reliability data \citep{dao2019wind,kiadaliry2026weibull}. 

The accessibility-process parameters are selected to generate a persistent \citep{hagen2013multivariate} and seasonally varying accessibility pattern \citep{martini2017accessibility} consistent with empirical offshore weather characteristics.  Moreover, North Sea data indicate substantially more favorable weather conditions during the summer months \citep{hagen2013multivariate, martini2017accessibility}. Accordingly, the base case follows a 52-week seasonal cycle with a moderate persistence level $\rho=0.55$, a pronounced seasonality amplitude $\Delta=0.40$, and peak accessibility in week $\bar{t}=30$, corresponding to late July. 

\subsection{Base-Case Analysis}
\label{subsec:base_case}

We first examine the optimal maintenance policy for the base case. In particular, we investigate how the optimal maintenance threshold varies over the annual accessibility cycle and how these variations relate to both current accessibility and the expected evolution of future maintenance opportunities. Figure~\ref{fig:base_case} presents the probabilities of accessibility in the following week, conditional on the current accessibility state, together with the optimal maintenance threshold. The threshold indicates the condition state at which maintenance becomes optimal when the turbine is accessible. 
The figure shows that the maintenance threshold changes systematically over the seasonal cycle. The optimal threshold is generally higher when current accessibility is favorable and when favorable accessibility is expected to persist or improve in the near future. In these periods, the operator is more willing to defer maintenance and continue operating the turbine in a more degraded state. By contrast, when future accessibility is expected to worsen, the threshold is lower, meaning that the operator is more willing to use the current maintenance opportunity before accessibility  becomes more limited.

\begin{figure}[htbp]
\includegraphics[width=1\columnwidth]{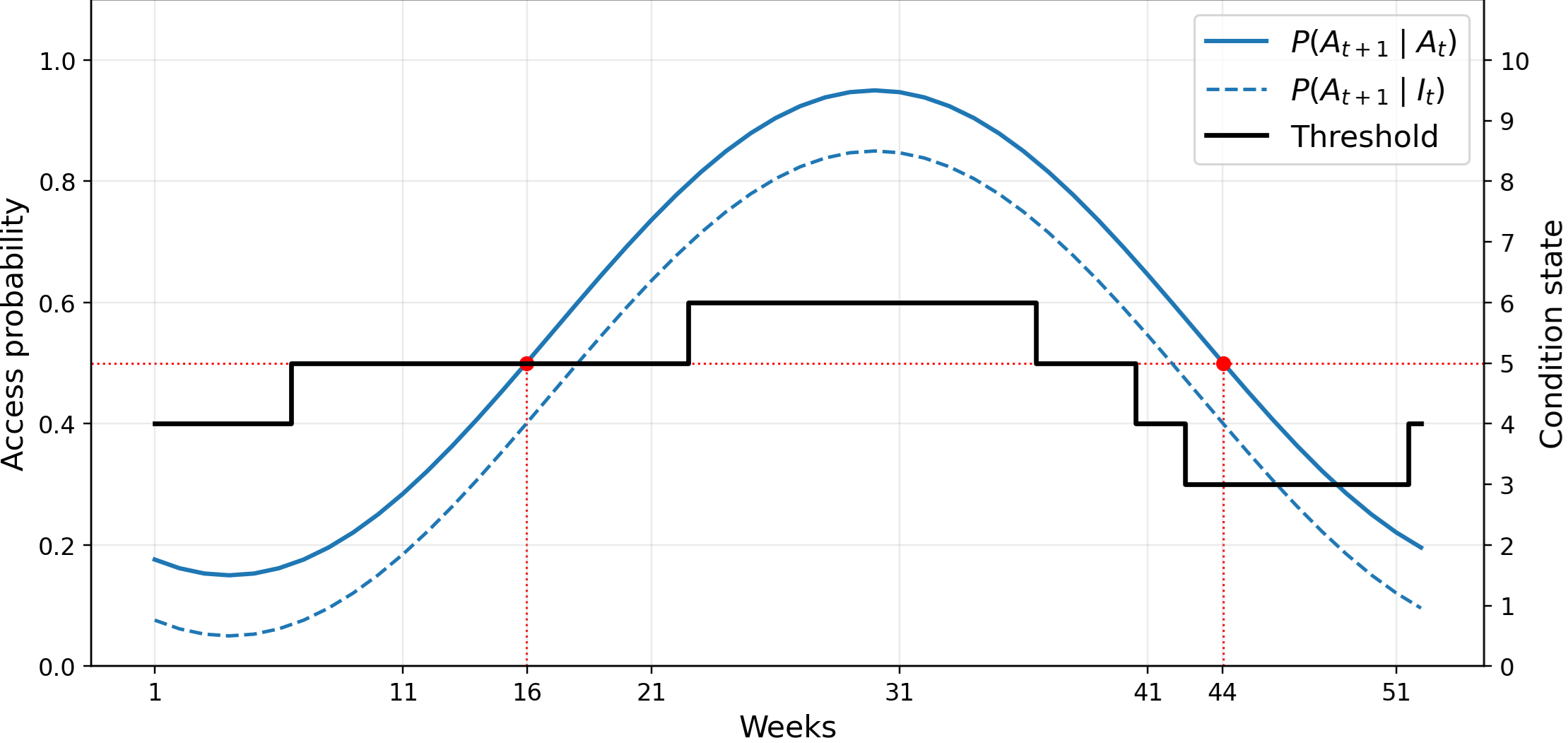}
\centering
\caption{Optimal maintenance thresholds over the seasonal cycle with the weekly accessibility transition probabilities.}
\label{fig:base_case}
\end{figure} 

The optimal threshold depends not only on next-week's accessibility probability but also on the weeks that follow. For instance, weeks 16 and 44 have approximately the same one-week accessibility probabilities (0.50), yet their optimal thresholds are 5 and 3, respectively. Accessibility is expected to improve after week 16 but worsen after week 44. Consequently, maintenance can be postponed to a more degraded condition in week 16, whereas the worsening accessibility outlook favors earlier intervention in week 44.


This comparison highlights the main operational insight of the model: optimal CBM decisions are shaped not only by the current condition of the turbine and current accessibility, but also by the expected evolution of future maintenance opportunities. Stochastic accessibility therefore preserves the familiar threshold structure of CBM, but makes the threshold state-dependent and forward-looking.

\subsection{Value of Informed Decisions}
\label{subsec:value_information}

We next evaluate the value of incorporating different types of information into maintenance decisions. The optimal policy uses both the condition of the asset and the accessibility process. 
To isolate the contribution of these information sources, we compare it with two benchmark policies that use progressively more information.

The first benchmark is an optimized age-based maintenance policy. This policy does not use the degradation state to determine the timing of preventive maintenance. Instead, preventive maintenance becomes due once a specified number of weeks has elapsed since the most recent maintenance intervention. Comparing the optimal policy with the age-based policy captures the combined value of using asset condition and accessibility information. 
We evaluate maintenance ages from 4 to 44 weeks and select the age that minimizes the long-run average cost. As shown in Figure~\ref{fig:benchmark_optimization}(a), the minimum is attained at a maintenance age of 24 weeks.

The second benchmark is an optimized constant-threshold policy. This policy observes the current turbine condition and performs preventive maintenance once the condition reaches a single threshold that is applied throughout the year. Thus, it retains condition-based decision making but does not adapt the threshold to the accessibility process. 
The difference between this benchmark and the optimal policy therefore isolates the value of incorporating accessibility information into CBM decisions. 
We evaluate every feasible constant threshold and select the one that minimizes the long-run average cost. Figure~\ref{fig:benchmark_optimization}(b) reports the resulting long-run average cost for each threshold. The minimum is attained at a condition threshold of 4, which is therefore used as the optimized constant-threshold benchmark.

\begin{figure}[htbp]
\centering

\begin{minipage}[t]{0.48\textwidth}
    \centering
    \includegraphics[width=\linewidth]{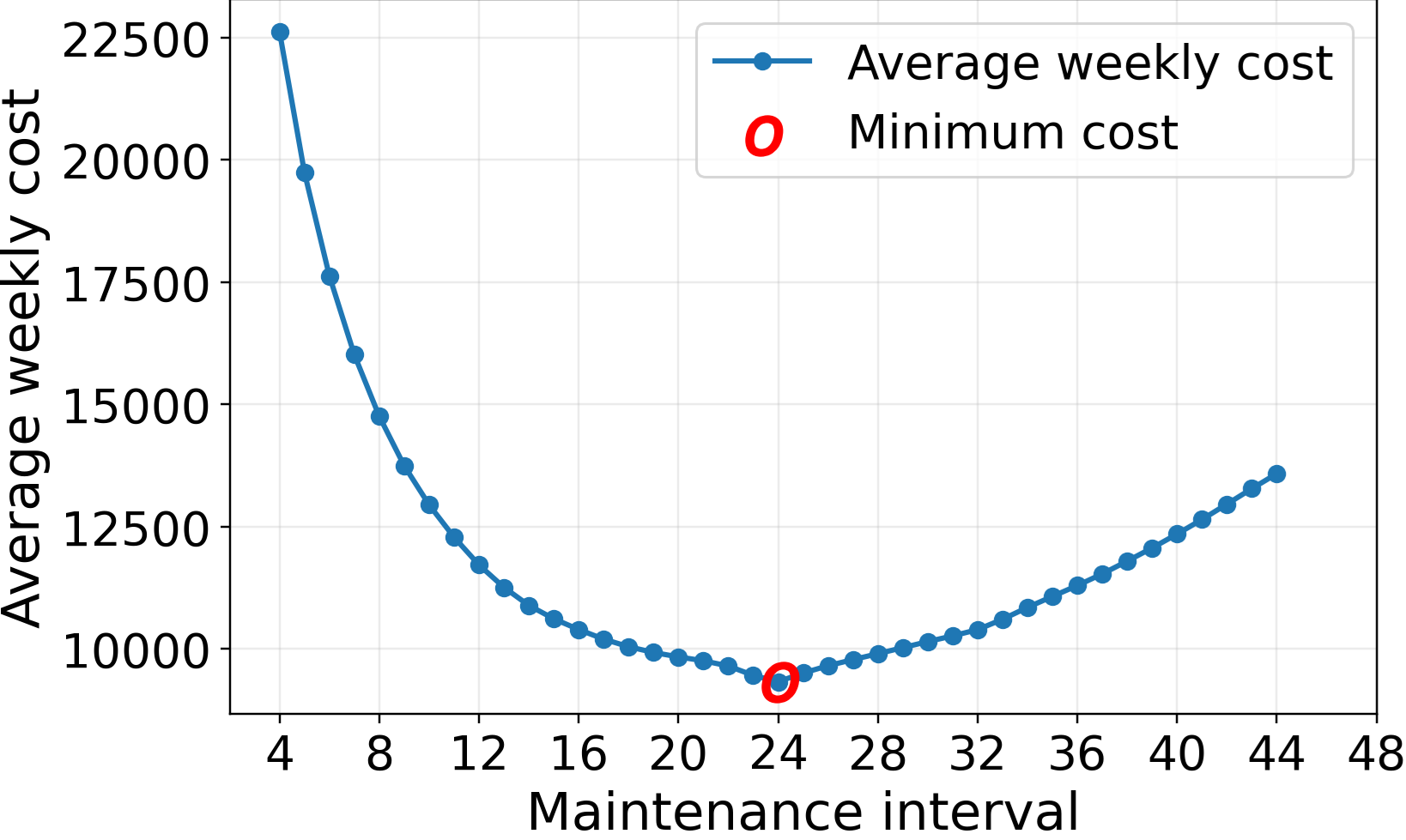}
    \small (a) Age-based maintenance policy
\end{minipage}\hfill
\begin{minipage}[t]{0.48\textwidth}
    \centering
    \includegraphics[width=\linewidth]{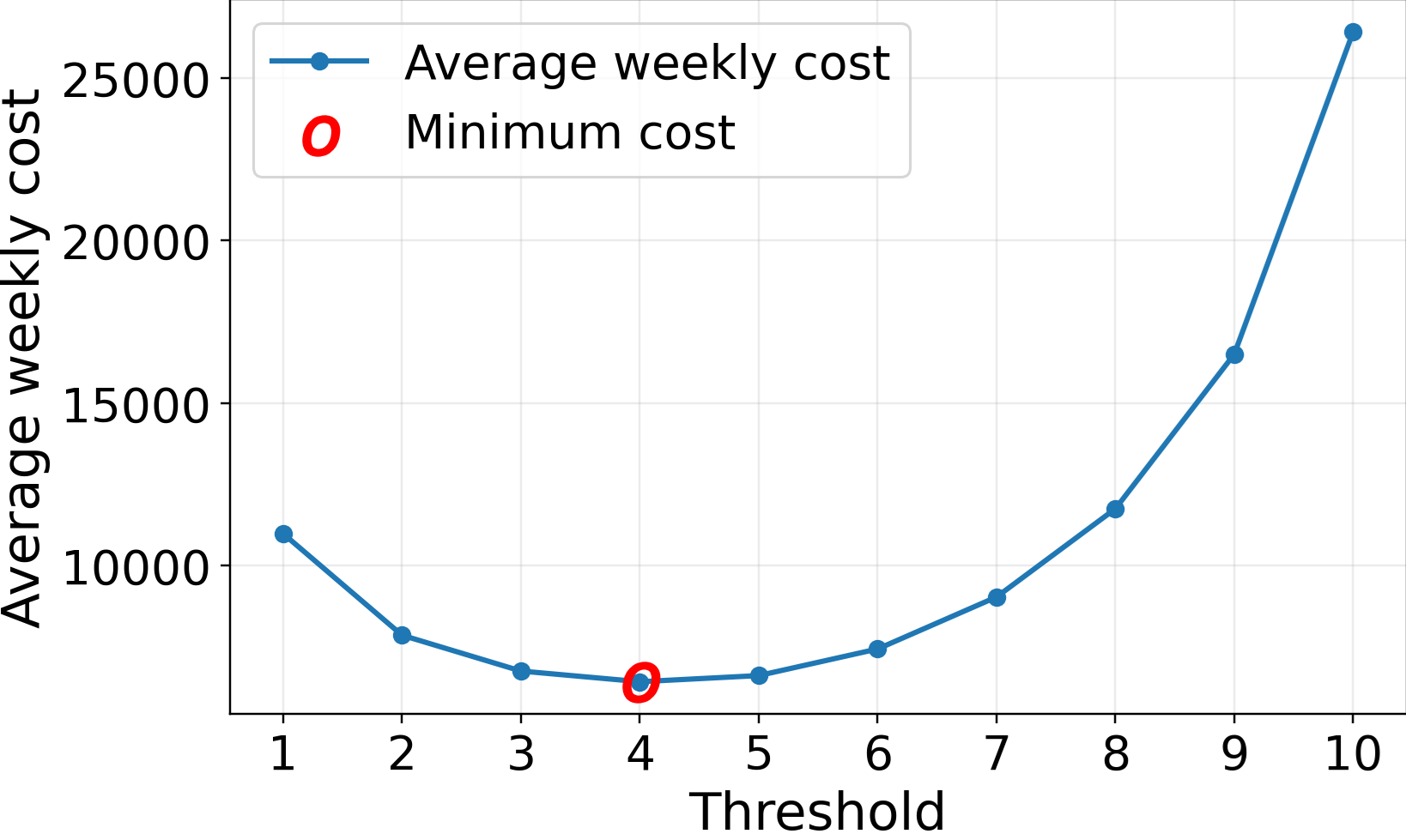}
    \small (b) Constant-threshold policy
\end{minipage}

\caption{Panel (a) reports the long run average cost for maintenance ages between 4 and 44 weeks. Panel (b) reports the long run average cost for each feasible constant condition threshold.}
\label{fig:benchmark_optimization}
\end{figure}

We next compare the long-run performance of the three policies. Table~\ref{tab:value_information} reports seven performance measures. Maintenance frequency is the average total number of interventions per year, while PM and CM frequencies report the average annual numbers of preventive and corrective interventions, respectively. The operating cost rate measures the average weekly cost of efficiency losses and failure-related downtime. The PM and CM cost rates represent the average weekly expenditures on preventive and corrective maintenance, respectively. Finally, the long-run average cost is the sum of these three cost rates and represents the overall economic performance of each policy. The percentages in parentheses indicate the contribution of each cost component to the total.
Together, these metrics allow us to see not only which policy is less costly, but also why the cost differences arise.

\begin{table}[htbp]
\centering
\caption{Performance comparison under the base-case setting}
\label{tab:value_information}
\renewcommand{\arraystretch}{1.2}
\begin{tabular}{lrrr}
\toprule
Metric
& Age-based
& Constant threshold
& Optimal policy \\
\midrule

Maintenance frequency 
& 1.96 & 1.35 & 1.30 \\

PM frequency 
& 1.88 & 1.31 & 1.26 \\

CM frequency 
& 0.075 & 0.036 & 0.038 \\

Operating cost rate
& \$2816.1 (30.21\%)
& \$2127.3 (33.06\%)
& \$1983.0 (32.20\%) \\

PM cost rate
& \$5429.8 (58.24\%)
& \$3783.2 (58.79\%)
& \$3629.6 (58.94\%) \\

CM cost rate
& \$1077.1 (11.55\%)
& \$524.5 (8.15\%)
& \$545.1 (8.85\%) \\

Long-run average cost
& \textbf{\$9323.0}
& \textbf{\$6435.0}
& \textbf{\$6157.7} \\
\bottomrule
\end{tabular}

\vspace{1mm}
\begin{minipage}{0.95\linewidth}
\centering
\footnotesize
\textit{Note:} Maintenance frequencies are reported as the expected number of events per year.
\end{minipage}

\end{table}

The difference between the optimal policy and the age-based policy showcases the value of incorporating asset condition and accessibility information into maintenance decisions. The optimal policy reduces long-run average cost from \$9323.0 to \$6157.7, corresponding to a 33.95\% improvement. It also performs fewer preventive interventions and experiences roughly half as many corrective interventions, with CM frequency decreasing from 0.075 to 0.038 events per year. These results demonstrate the substantial value of considering both the asset’s condition state and accessibility state when making maintenance decisions, rather than relying solely on elapsed time since the previous intervention. 

The difference between the optimal policy and the constant-threshold policy showcases the additional value of adapting maintenance decisions to changing accessibility conditions. Relative to the constant-threshold policy, the optimal policy reduces long-run average cost from \$6435.0 to \$6157.7, a 4.31\% improvement. Both policies observe the degradation state and are subject to the same accessibility restrictions; the difference is that the optimal policy adjusts its threshold according to the current accessibility state and the evolution of future maintenance opportunities. The gain comes from lower operating and preventive maintenance cost rates, which more than offset a small increase in corrective-maintenance cost.

Taken together, these comparisons quantify the economic value of the two information sources. Incorporating condition information reduces long-run average cost from \$9323.0 to \$6435.0, while adapting the maintenance threshold to accessibility provides a further reduction to \$6157.7. The results show that both condition information and accessibility dynamics are economically important.


\begin{figure}[htbp]
\centering

\begin{minipage}{0.9\linewidth}
    \centering
    \includegraphics[width=\linewidth]{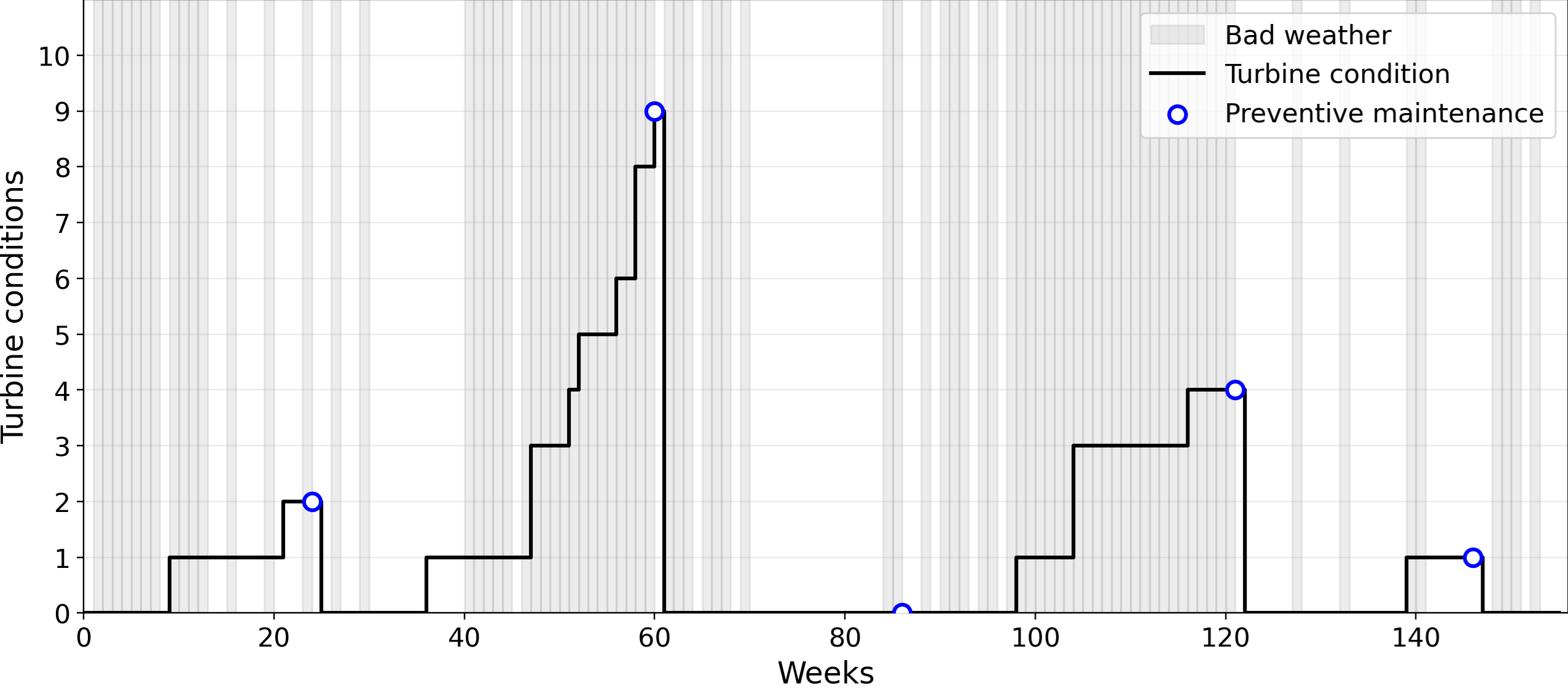}
    \small (a) Age-based maintenance policy
\end{minipage}

\vspace{2mm}

\begin{minipage}{0.9\linewidth}
    \centering
    \includegraphics[width=\linewidth]{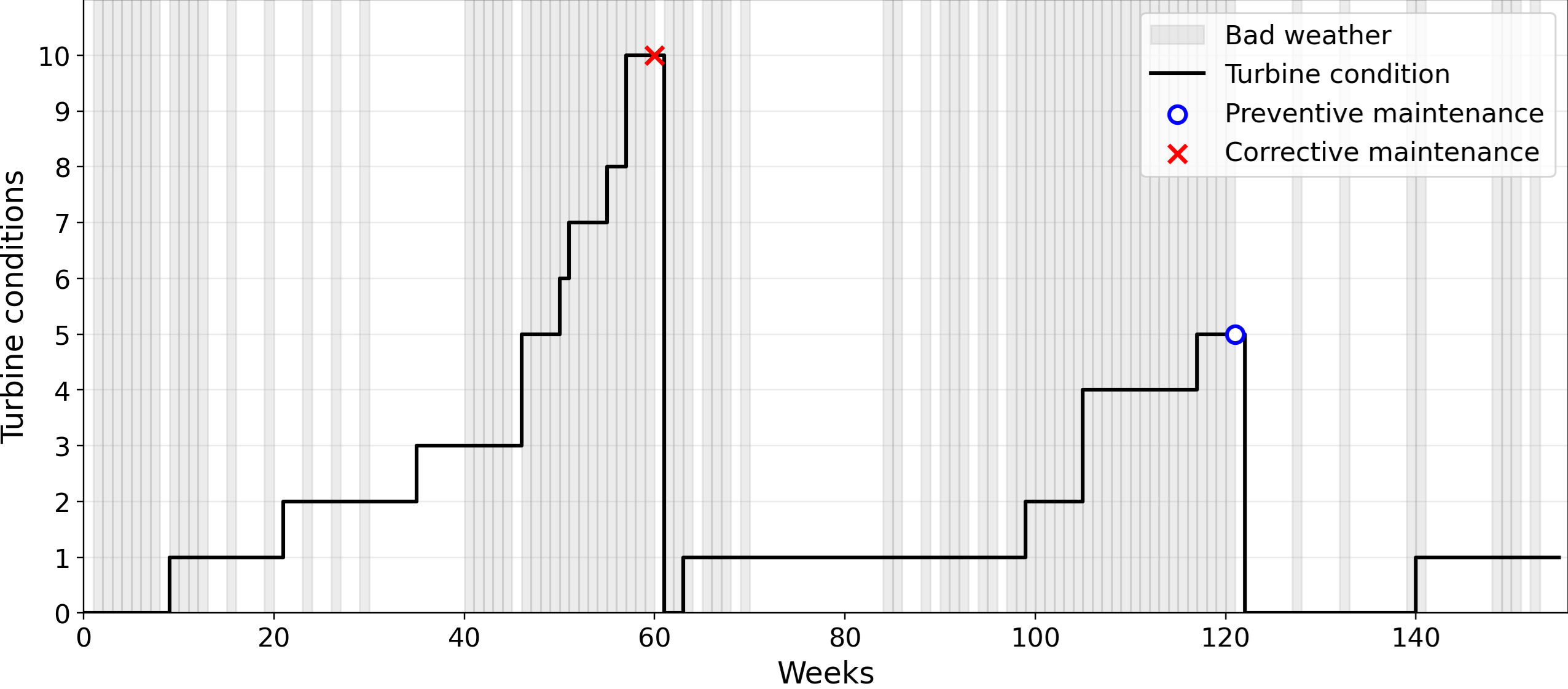}
    \small (b) Constant-threshold policy
\end{minipage}

\vspace{2mm}

\begin{minipage}{0.9\linewidth}
    \centering
    \includegraphics[width=\linewidth]{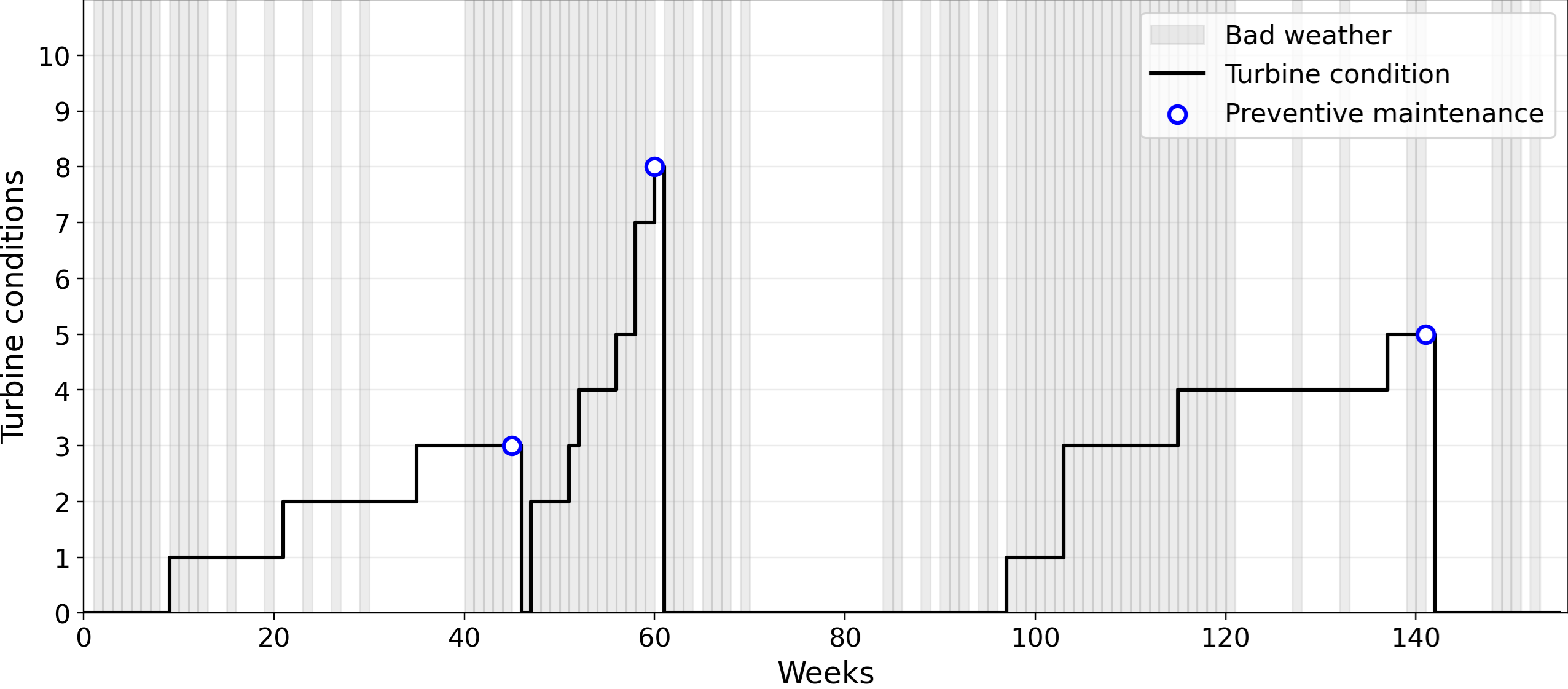}
    \small (c) Optimal policy
\end{minipage}

\caption{Illustrative simulated condition trajectories under the three maintenance policies.}
\label{fig:policy_simulation_comparison}
\end{figure}

Figure~\ref{fig:policy_simulation_comparison} illustrates the maintenance decisions under (a) the age-based policy, (b) the constant-threshold policy, and (c) the optimal policy. Each panel shows the turbine's condition, accessible and inaccessible weeks, and preventive and corrective maintenance actions over a 150-week simulation. The purpose of this comparison is to show how the different information used by each policy translates into different maintenance decisions along the same operating trajectory. To ensure a direct comparison, the same realized accessibility conditions are used for all three simulations, and the same random seed is used for the stochastic degradation process. Thus, differences in the resulting condition trajectories and maintenance events arise from the decision rules themselves rather than from different realizations of accessibility or degradation.


Figure~\ref{fig:policy_simulation_comparison}(a) shows that the age-based policy performs preventive maintenance while the turbine is still in the new condition state, illustrating how reliance on elapsed time can lead to unnecessary interventions. A different effect appears in Figure~\ref{fig:policy_simulation_comparison}(b), in which the constant-threshold policy skips an available maintenance opportunity in week 46, because the turbine has not reached its fixed threshold of 4. In contrast, the optimal policy in Figure~\ref{fig:policy_simulation_comparison}(c) lowers its threshold to 3 during weeks 43--51 and performs preventive maintenance in week 46, before the extended period of inaccessibility. With no further accessibility until week 60, the turbine operating under the constant-threshold policy subsequently fails. These examples illustrate how the optimal policy uses condition information to avoid unnecessary maintenance and the future accessibility outlook to determine when an available maintenance opportunity should be used.

Taken together, these trajectories demonstrate the distinct roles of condition and accessibility information. Relying solely on elapsed time can lead to unnecessary preventive maintenance, while failing to adapt maintenance thresholds to the accessibility outlook can result in missed maintenance opportunities and subsequent failure during prolonged periods of inaccessibility. 

\subsection{Sensitivity Analysis}

We next examine how the structure of the optimal maintenance policy changes with the main characteristics of the accessibility process, degradation process, and costs. For each experiment, one parameter is varied around the base-case value while all remaining parameters are fixed, and the optimal policy is recomputed. Table~\ref{tab:sensitivity_values} summarizes the parameter values considered in the sensitivity analysis, organized into three groups: accessibility characteristics, comprising weather persistence ($\rho$) and seasonality amplitude ($\Delta$); degradation characteristics, comprising the mean ($\mu$) and standard deviation ($\sigma$) of time to failure; and economic characteristics, comprising preventive maintenance cost ($c^{PM}$) and the efficiency-loss multiplier ($\gamma_g$).

\begin{table}[htbp]
\centering
\caption{Sensitivity-analysis parameter values}
\label{tab:sensitivity_values}
\renewcommand{\arraystretch}{1.2}
\begin{tabular}{lllccc}
\toprule
Category & Factor & Symbol & \multicolumn{3}{c}{Values considered} \\
\cmidrule(lr){4-6}
& & & Low & Base & High \\
\midrule

\multirow{2}{*}{Economic}
& Preventive maintenance cost
& $c^{\mathrm{PM}}$
& \$25{,}000
& \$150{,}000
& \$749{,}999 \\

& Efficiency-loss multiplier
& $\gamma_g$
& 0.25
& 1.00
& 4.00 \\

\midrule

\multirow{2}{*}{Degradation}
& Mean time to failure
& $\mu$
& 60 weeks
& 80 weeks
& 100 weeks \\

& Standard deviation of time to failure
& $\sigma$
& 25 weeks
& 35 weeks
& 45 weeks \\

\midrule

\multirow{2}{*}{Accessibility}
& Weather persistence
& $\rho$
& 0.50
& 0.55
& 0.60 \\

& Seasonality amplitude
& $\Delta$
& 0.30
& 0.40
& 0.45 \\

\bottomrule
\end{tabular}
\end{table}

For each parameter, we present the optimal maintenance thresholds over the 52-week cycle under the low, base, and high settings. The figures illustrate how changes in each parameter affect both the overall threshold level and its seasonal variation, allowing us to identify when maintenance should be performed earlier or postponed in response to different system characteristics.

\subsubsection{Accessibility Characteristics}

Weather persistence determines how long favorable or unfavorable accessibility conditions are likely to last, thereby influencing the risk of postponing a maintenance. Figure~\ref{fig:accessibility_sensitivity} shows how the optimal thresholds change with the weather-persistence parameter $\rho$. As persistence increases, thresholds become slightly lower during unfavorable periods and slightly higher during favorable periods, because current accessibility becomes more informative about the conditions likely to follow. Thus, stronger persistence makes the policy more responsive to the accessibility outlook: maintenance is performed earlier when poor accessibility is likely to persist and can be postponed when favorable conditions is likely to continue.

\begin{figure}[htbp]
\centering

\begin{minipage}[t]{0.48\linewidth}
    \centering
    \includegraphics[width=\linewidth]{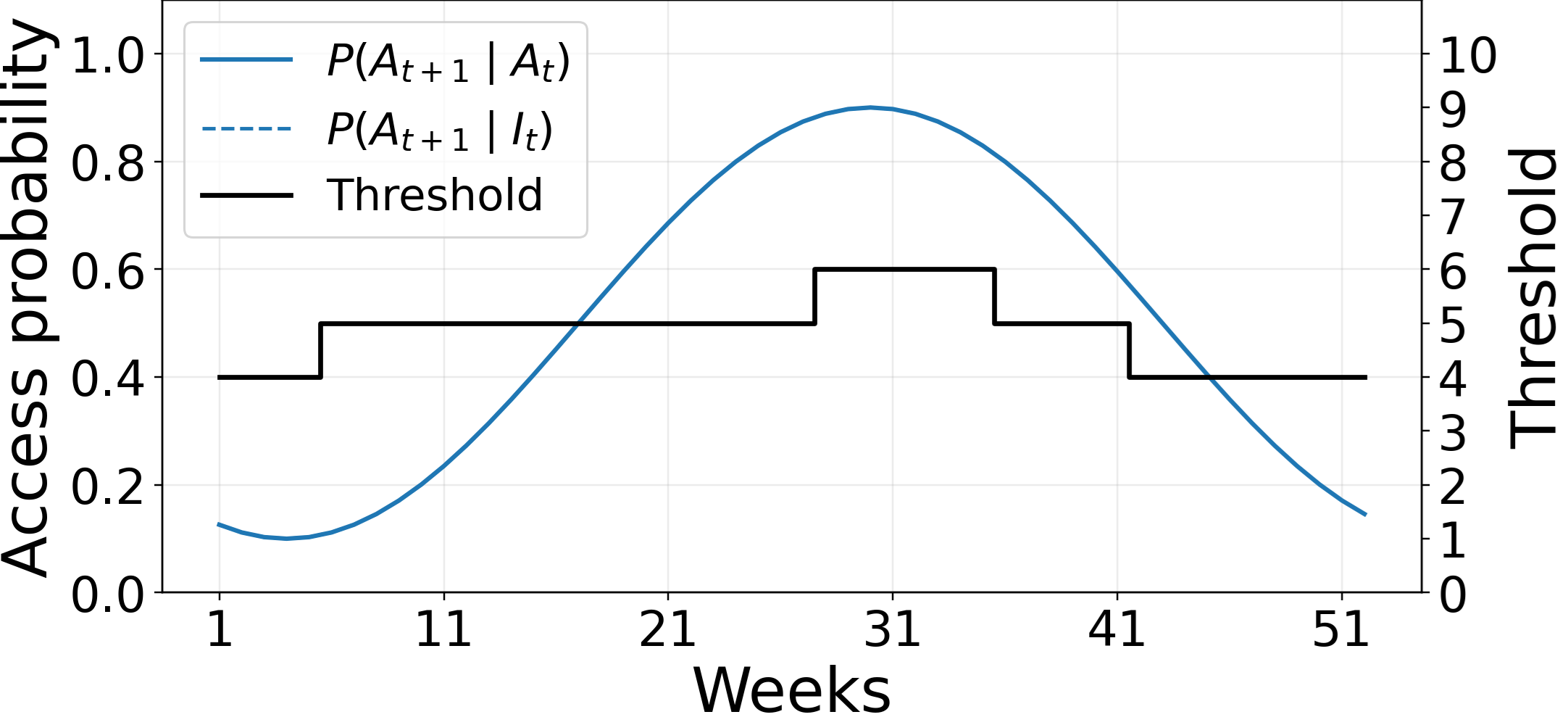}
    \small (a) $\rho=0.50, \ \Delta=0.40$
\end{minipage}
\hfill
\begin{minipage}[t]{0.48\linewidth}
    \centering
    \includegraphics[width=\linewidth]{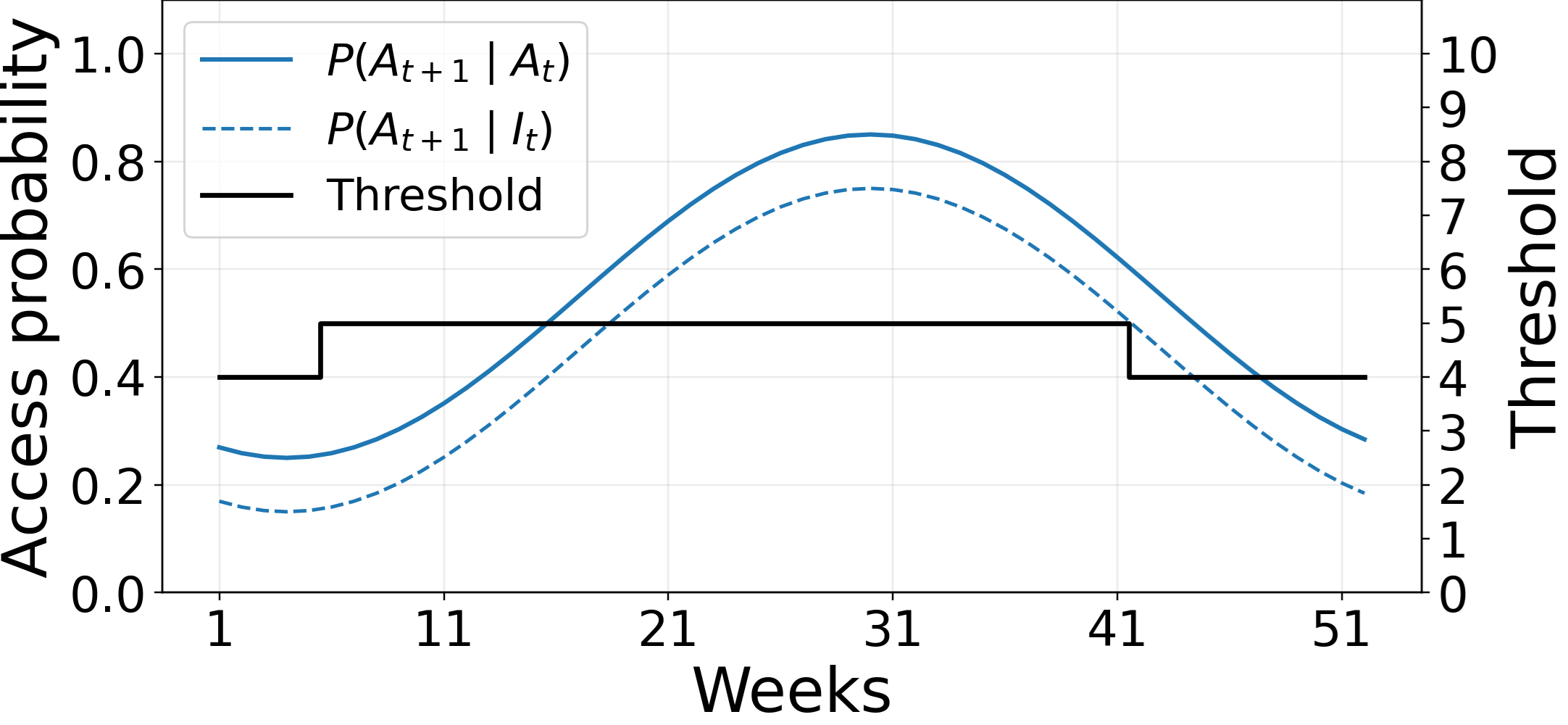}
    \small (b) $\rho=0.55, \ \Delta=0.30$
\end{minipage}

\vspace{2mm}

\begin{minipage}[t]{0.48\linewidth}
    \centering
    \includegraphics[width=\linewidth]{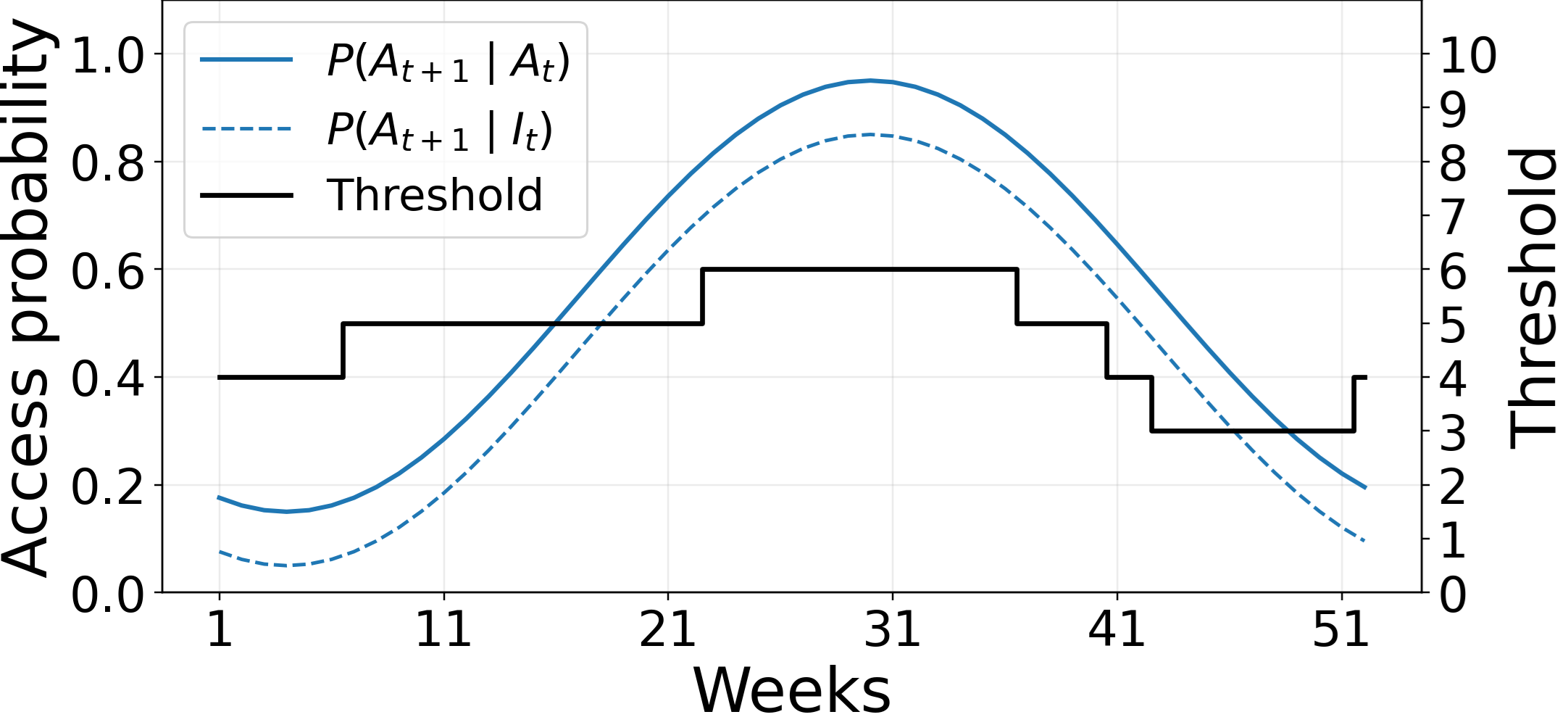}
    \small (c) $\rho=0.55, \ \Delta=0.40$
\end{minipage}
\hfill
\begin{minipage}[t]{0.48\linewidth}
    \centering
    \includegraphics[width=\linewidth]{figures/base_case.png}
    \small (d) $\rho=0.55, \ \Delta=0.40$
\end{minipage}

\vspace{2mm}

\begin{minipage}[t]{0.48\linewidth}
    \centering
    \includegraphics[width=\linewidth]{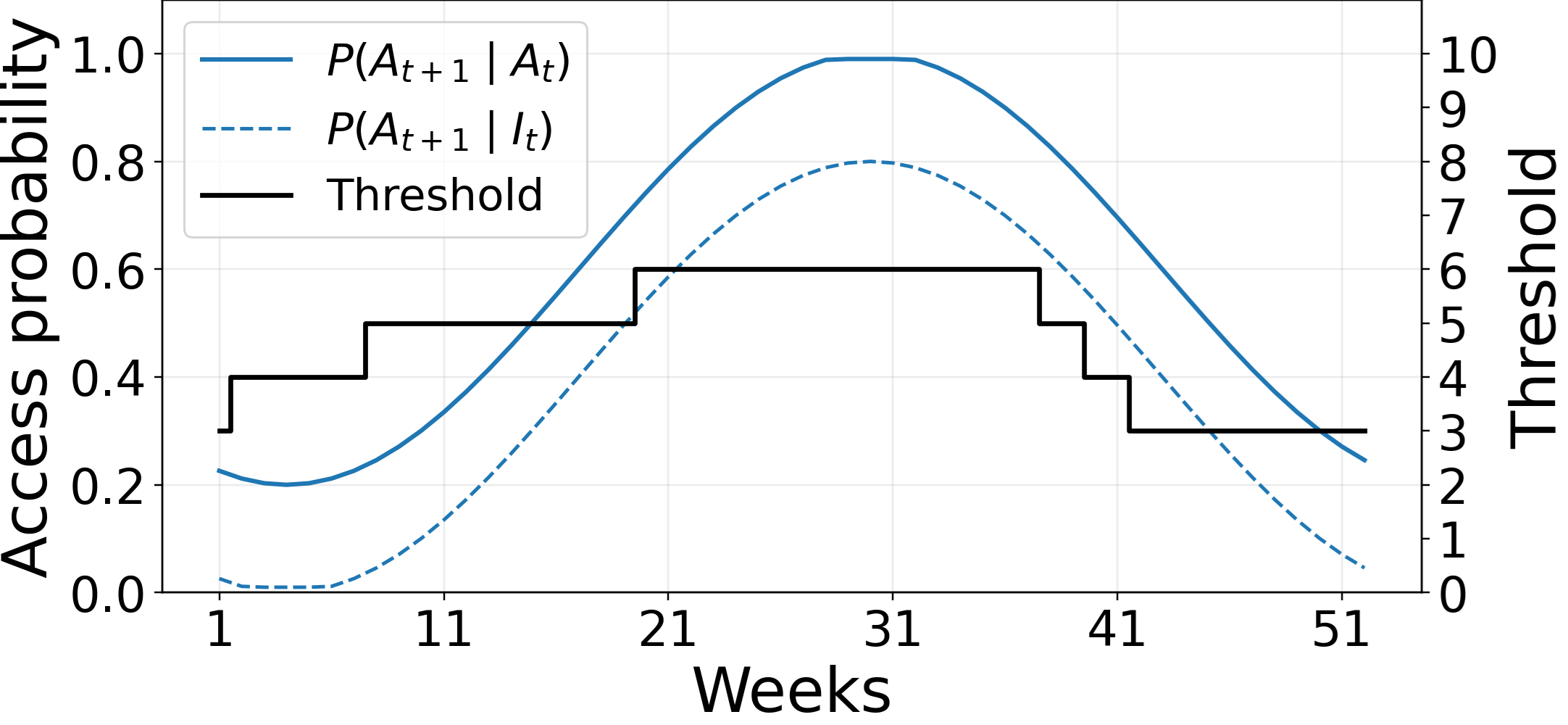}
    \small (e) $\rho=0.60, \ \Delta=0.40$
\end{minipage}
\hfill
\begin{minipage}[t]{0.48\linewidth}
    \centering
    \includegraphics[width=\linewidth]{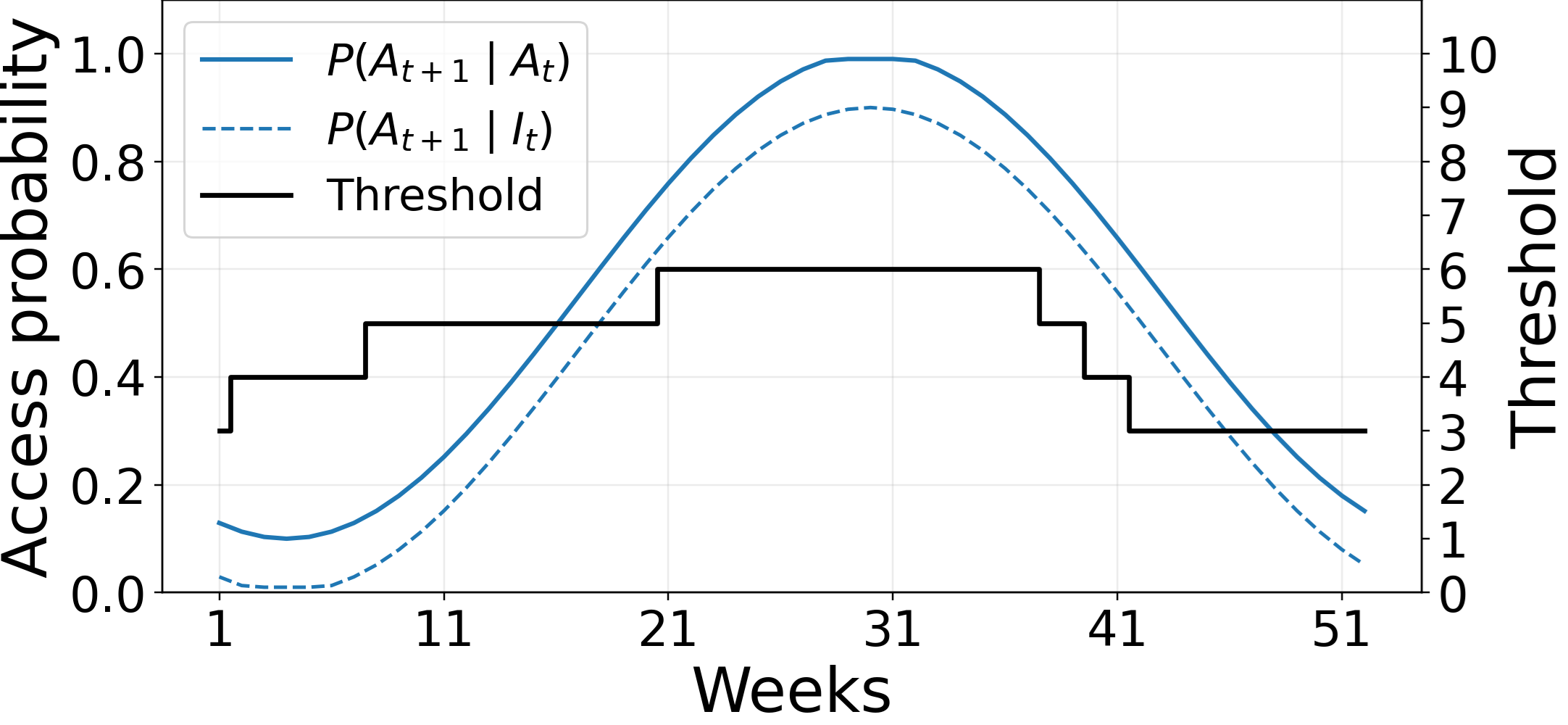}
    \small (f) $\rho=0.55, \ \Delta=0.45$
\end{minipage}

\caption{Sensitivity of the optimal maintenance thresholds to accessibility characteristics. The left column varies weather persistence $\rho$, while the right column varies seasonality amplitude $\Delta$.}
\label{fig:accessibility_sensitivity}
\end{figure}

Seasonality determines how maintenance opportunities are distributed throughout the year, influencing the importance of scheduling maintenance relative to favorable and unfavorable seasons. Figure~\ref{fig:accessibility_sensitivity} shows how the optimal thresholds change with the seasonality amplitude $\Delta$. When seasonality is weak, such as $\Delta=0.30$, the optimal threshold is nearly constant throughout the year. As seasonality strengthens, thresholds become higher during favorable periods and lower during unfavorable periods. Thus, stronger seasonality increases the need to adapt maintenance decisions to changing accessibility, while weak seasonality makes a constant threshold nearly sufficient.

\subsubsection{Degradation Characteristics}

The mean time to failure determines the pace of degradation and, consequently, how long maintenance can be postponed without excessive failure risk. Figure~\ref{fig:degradation_sensitivity} shows that shorter mean time to failure leads to lower optimal thresholds, while longer mean time to failure leads to higher thresholds. Thus, faster degradation calls for earlier intervention, leaving less flexibility to wait for future maintenance opportunities. 

\begin{figure}[htbp]
\centering

\begin{minipage}[t]{0.48\linewidth}
    \centering
    \includegraphics[width=\linewidth]{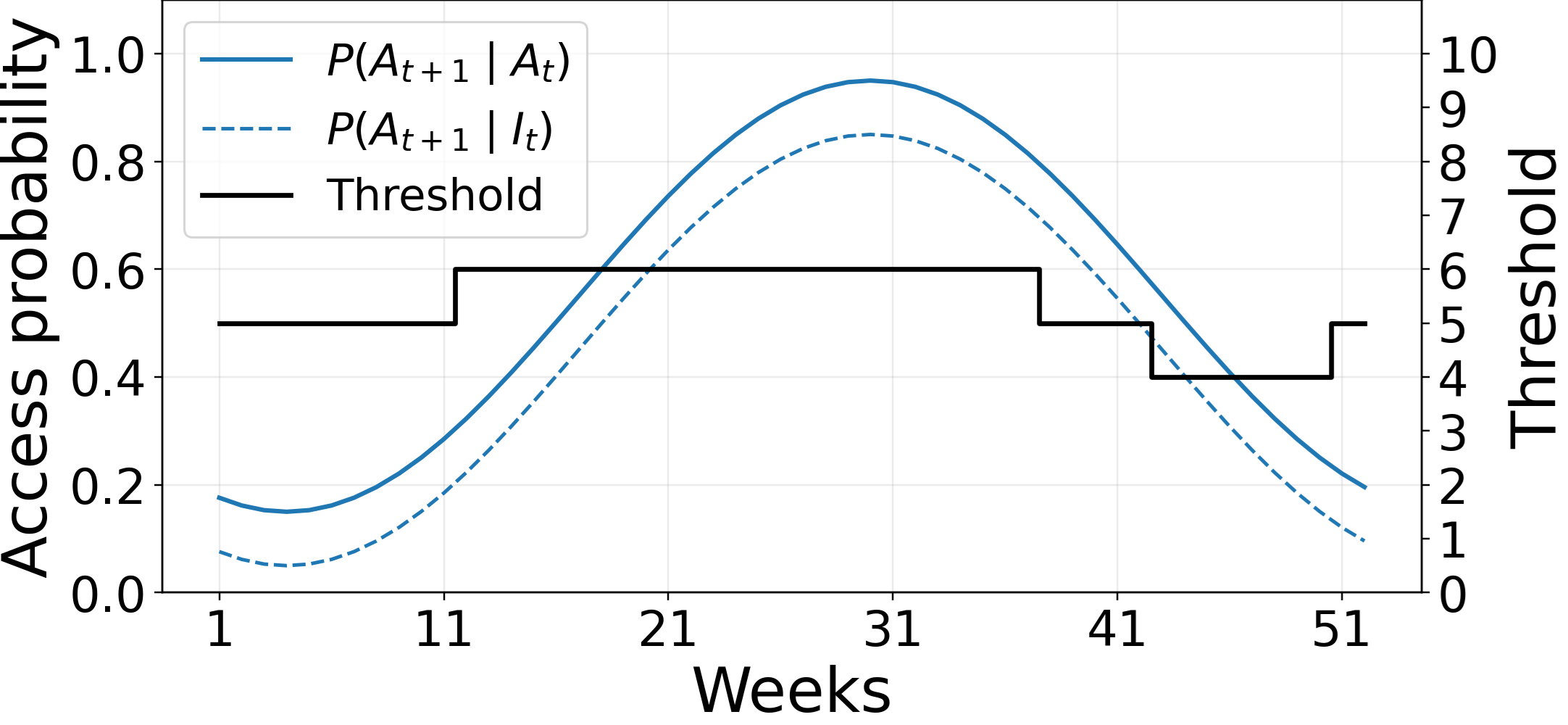}
    \small (a) $\mu=100, \ \sigma=35$
\end{minipage}
\hfill
\begin{minipage}[t]{0.48\linewidth}
    \centering
    \includegraphics[width=\linewidth]{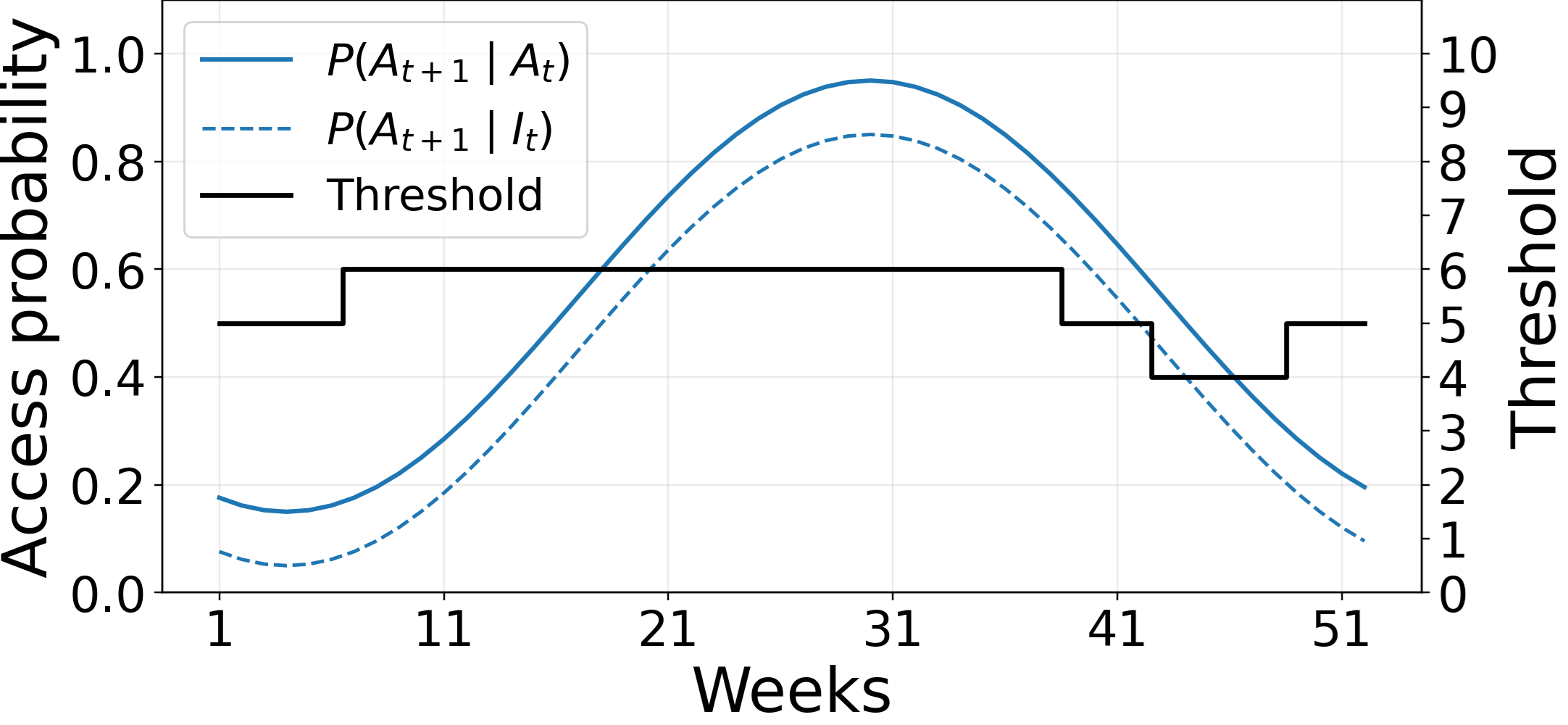}
    \small (b) $\mu=80, \ \sigma=25$
\end{minipage}

\vspace{2mm}

\begin{minipage}[t]{0.48\linewidth}
    \centering
    \includegraphics[width=\linewidth]{figures/base_case.png}
    \small (c) $\mu=80, \ \sigma=35$
\end{minipage}
\hfill
\begin{minipage}[t]{0.48\linewidth}
    \centering
    \includegraphics[width=\linewidth]{figures/base_case.png}
    \small (d) $\mu=80, \ \sigma=35$
\end{minipage}

\vspace{2mm}

\begin{minipage}[t]{0.48\linewidth}
    \centering
    \includegraphics[width=\linewidth]{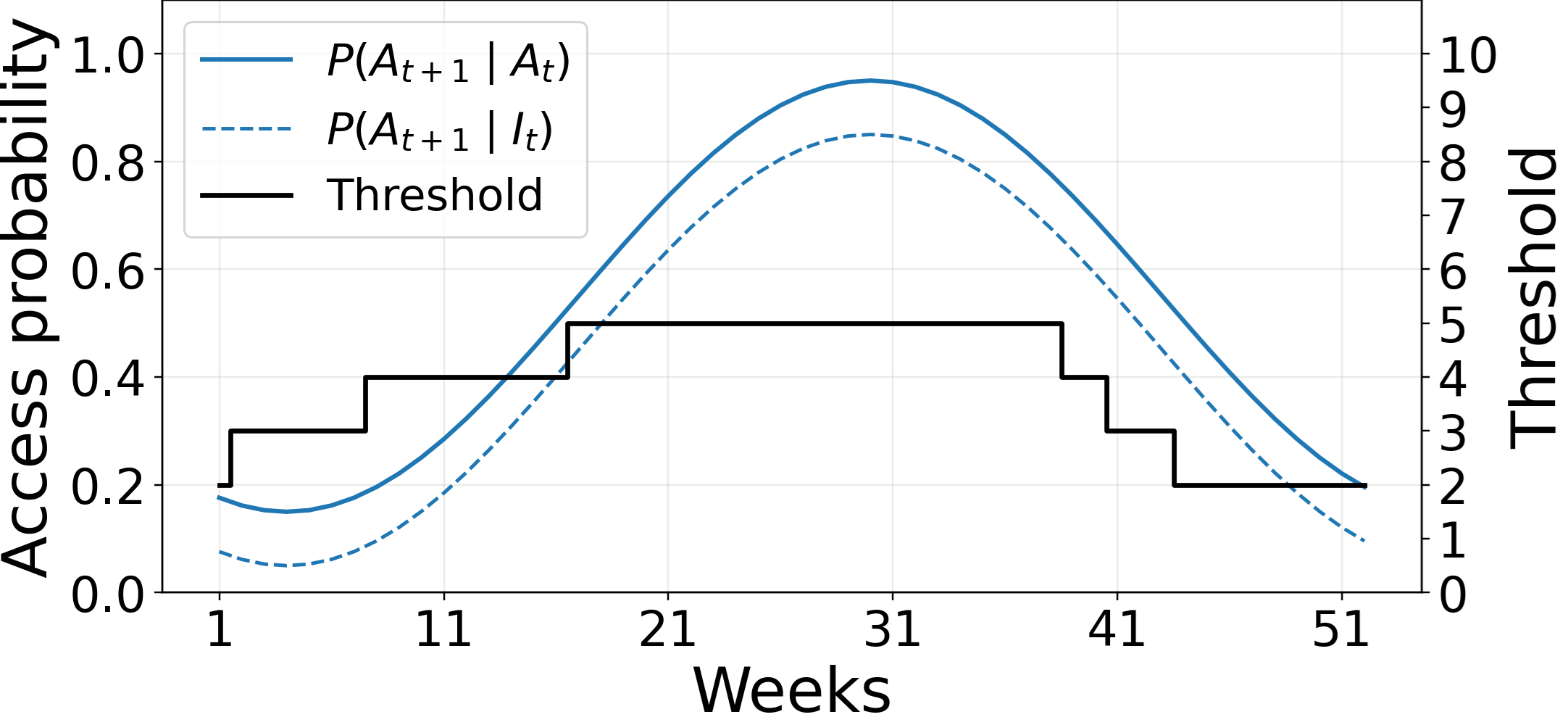}
    \small (e) $\mu=60, \ \sigma=35$
\end{minipage}
\hfill
\begin{minipage}[t]{0.48\linewidth}
    \centering
    \includegraphics[width=\linewidth]{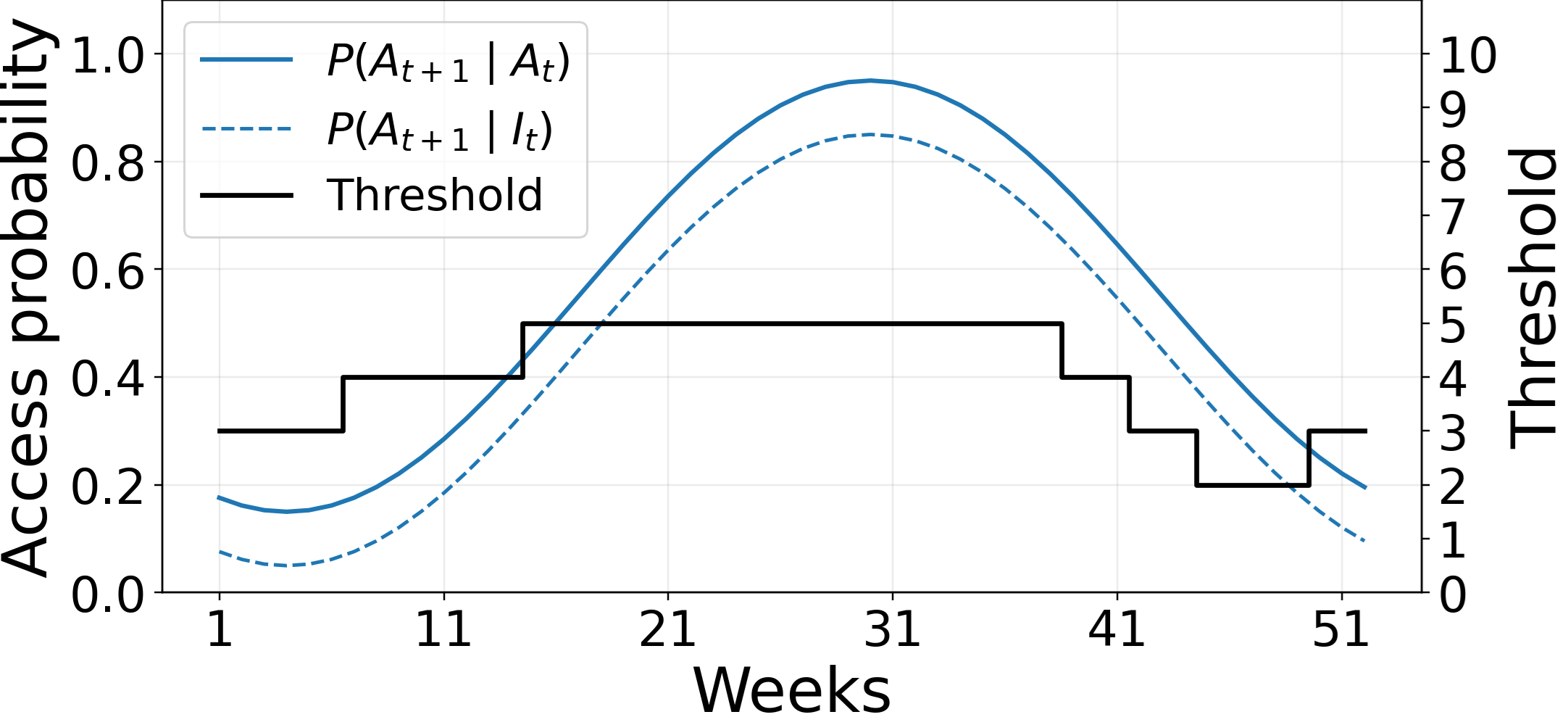}
    \small (f) $\mu=80, \ \sigma=45$
\end{minipage}

\caption{Sensitivity of the optimal maintenance thresholds to degradation characteristics. The left column varies mean time to failure $\mu$, while the right column varies standard deviation of time to failure $\sigma$.}
\label{fig:degradation_sensitivity}
\end{figure}

The standard deviation of time to failure captures uncertainty in degradation timing, which affects the risk of waiting for future maintenance opportunities. As the standard deviation of time to failure increases, the optimal threshold varies more across the seasonal accessibility cycle. Greater degradation uncertainty makes the value of future maintenance opportunities more consequential, causing the policy to respond more strongly to changes in accessibility.

\subsubsection{Economic Characteristics}

Maintenance timing involves a tradeoff between intervening early, which incurs preventive maintenance costs and sacrifices the asset’s remaining useful life, and postponing intervention, which increases the risk of more costly failure. The cost of preventive maintenance therefore plays a key role in determining the optimal threshold. Figure~\ref{fig:economic_sensitivity} shows that lower preventive maintenance costs lead to lower optimal thresholds, while higher preventive maintenance costs shift the thresholds toward more degraded states. As preventive maintenance becomes more costly, its economic advantage over corrective maintenance narrows, so the policy delays intervention and allows the asset to degrade further before maintenance becomes economical. 

\begin{figure}[htbp]
\centering

\begin{minipage}[t]{0.48\linewidth}
    \centering
    \includegraphics[width=\linewidth]{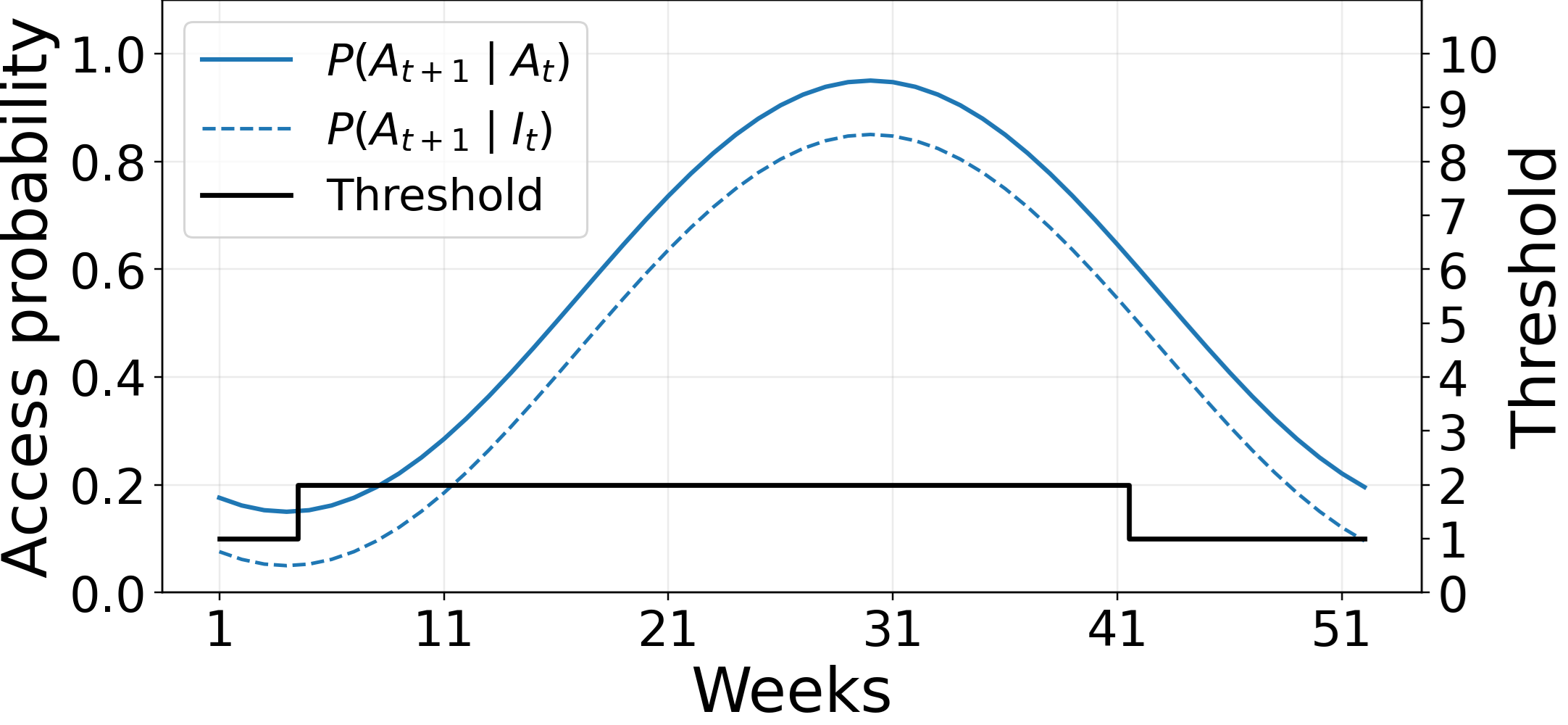}
    \small (a) $c^{PM}=\$25{,}000, \ \gamma_g=1.00$
\end{minipage}
\hfill
\begin{minipage}[t]{0.48\linewidth}
    \centering
    \includegraphics[width=\linewidth]{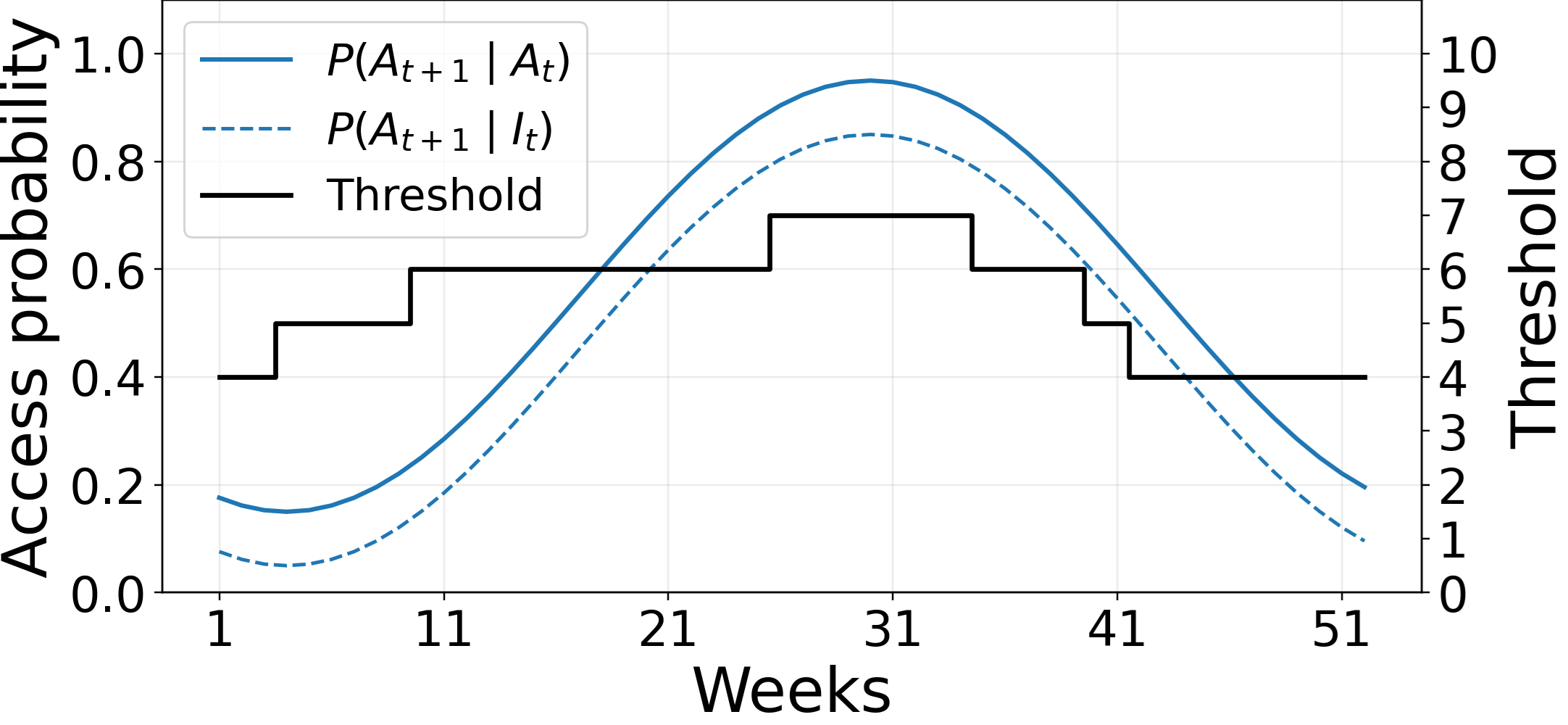}
    \small (b) $c^{PM}=\$150{,}000, \ \gamma_g=0.25$
\end{minipage}

\vspace{2mm}

\begin{minipage}[t]{0.48\linewidth}
    \centering
    \includegraphics[width=\linewidth]{figures/base_case.png}
    \small (c) $c^{PM}=\$150{,}000, \ \gamma_g=1.00$
\end{minipage}
\hfill
\begin{minipage}[t]{0.48\linewidth}
    \centering
    \includegraphics[width=\linewidth]{figures/base_case.png}
    \small (d) $c^{PM}=\$150{,}000, \ \gamma_g=1.00$
\end{minipage}

\vspace{2mm}

\begin{minipage}[t]{0.48\linewidth}
    \centering
    \includegraphics[width=\linewidth]{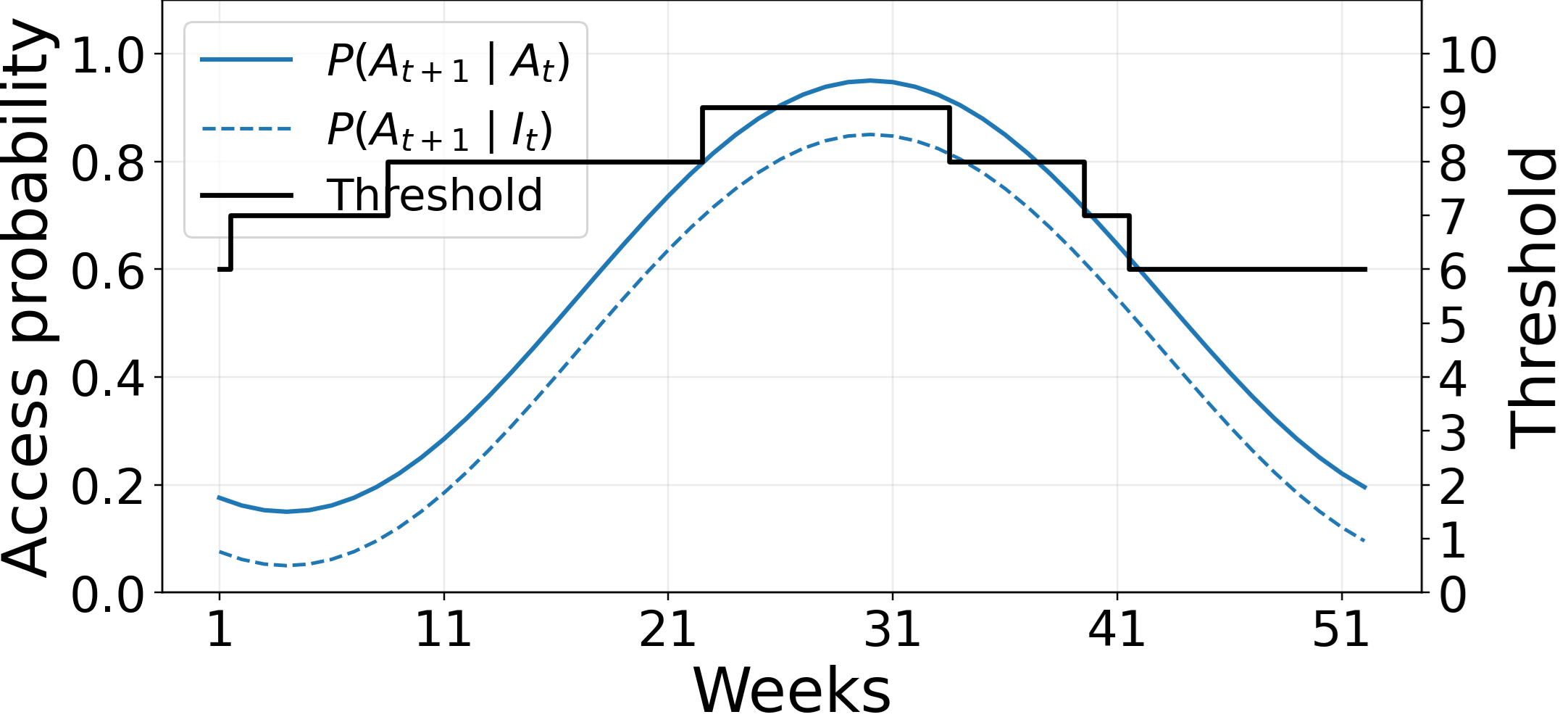}
    \small (e) $c^{PM}=\$749{,}999, \ \gamma_g=1.00$
\end{minipage}
\hfill
\begin{minipage}[t]{0.48\linewidth}
    \centering
    \includegraphics[width=\linewidth]{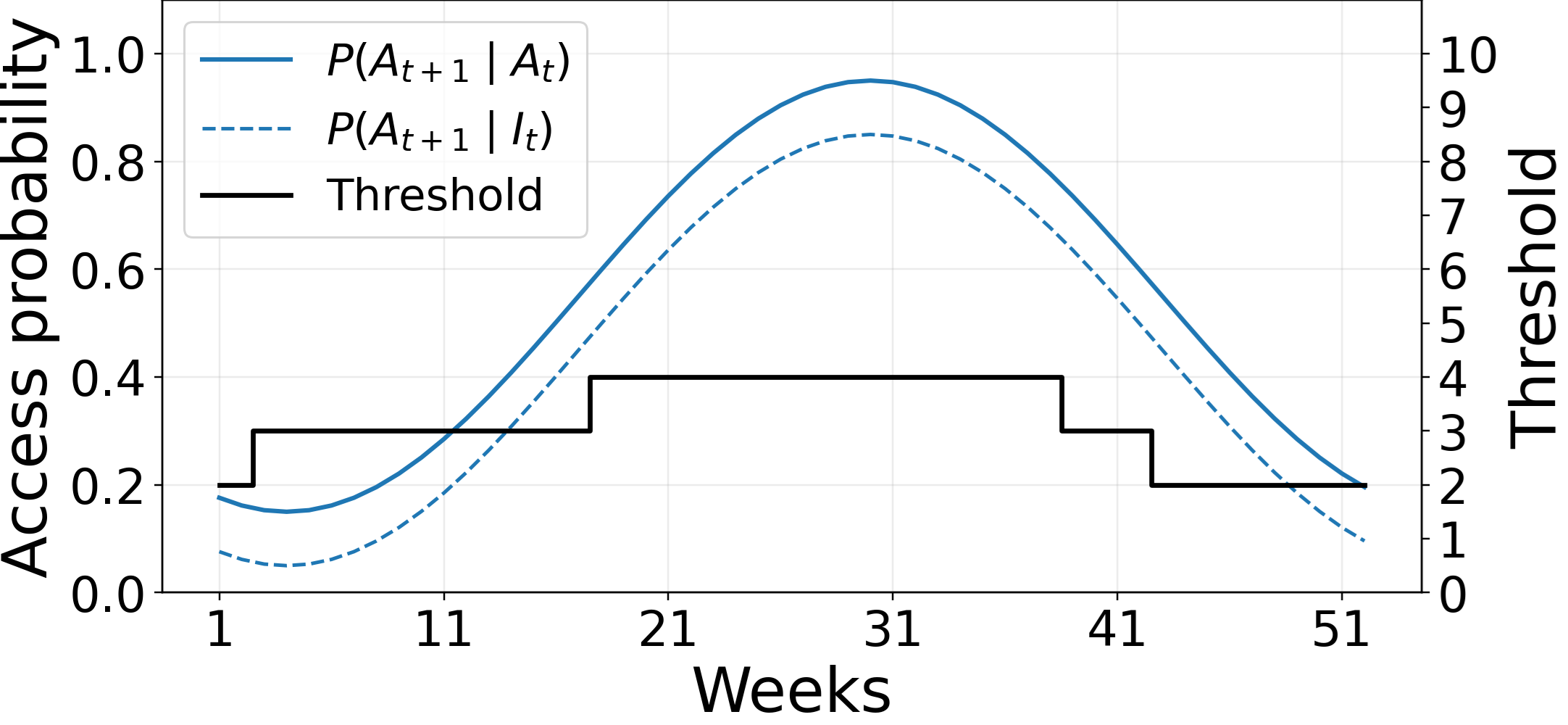}
    \small (f) $c^{PM}=\$150{,}000, \ \gamma_g=4.00$
\end{minipage}

\caption{Sensitivity of the optimal maintenance thresholds to economic characteristics. The left column varies preventive maintenance cost $c^{PM}$, while the right column varies efficiency loss multiplier $\gamma_g$.}
\label{fig:economic_sensitivity}
\end{figure}

Efficiency loss determines the economic consequences of continuing to operate an asset in a degraded condition. As the condition-dependent efficiency-loss multiplier $\gamma_g$ increases, the optimal thresholds decrease. Higher degradation-related operating losses make continued operation in degraded states more costly, increasing the value of earlier preventive maintenance.

\section{Conclusion}
\label{conclusion}

This paper studies condition-based maintenance (CBM) when maintenance opportunities are not continuously available but instead evolve stochastically over time. In such settings, the maintenance decision involves more than determining whether an asset is sufficiently degraded to justify intervention. The decision-maker must also determine whether a currently available maintenance opportunity should be used or whether intervention can be postponed in anticipation of future opportunities. We formulate this problem as a finite-state Markov decision process under a long-run average cost criterion. We analytically characterize the structure of the optimal policy and conduct numerical experiments to quantify the value of condition and accessibility information through comparisons with age-based and constant-threshold policies. We further examine how changes in accessibility, degradation, and cost parameters affect the optimal policy through sensitivity analysis.

Our structural analysis extends the familiar threshold result in CBM to settings with stochastic maintenance opportunities. When maintenance is continuously available, classical CBM models commonly yield a single condition threshold separating continued operation from intervention. In our setting, this structure is preserved, but the threshold becomes accessibility-dependent: for each accessible state, there exists a condition level above which maintenance is optimal. Thus, two assets in the same degradation state may optimally receive different maintenance decisions because the value of waiting depends on the availability of future maintenance opportunities. The result therefore retains the simplicity and interpretability of a threshold policy while incorporating the forward-looking effect of stochastic accessibility. This distinction also sharpens the role of weather relative to earlier weather-aware maintenance models. Prior studies have considered weather through maintenance disruptions, waiting times, seasonal operating conditions, and production losses. Our analysis isolates accessibility as the decision-relevant mechanism and shows structurally how it modifies the standard CBM threshold: current accessibility determines whether maintenance can be performed, while the stochastic evolution of accessibility determines the value of postponing an available intervention. Consequently, the same current level of accessibility need not imply the same maintenance threshold when the outlook for future opportunities differs.

The numerical study illustrates these mechanisms in an offshore wind setting with stochastic and seasonal weather-driven accessibility. The results show that maintenance thresholds vary systematically over the seasonal accessibility cycle and that the same pointwise accessibility level can imply different maintenance decisions depending on the accessibility conditions expected to follow. The benchmark comparisons further quantify the value of the information used by the policy. In the base case, adapting the condition threshold to accessibility reduces long-run average cost by 4.31\% relative to the optimized constant-threshold policy. Compared with the optimized age-based policy, the optimal policy reduces long-run average cost by 33.95\%, while requiring fewer preventive interventions and experiencing substantially fewer corrective interventions. These comparisons distinguish two sources of value: condition information determines whether maintenance is warranted by the physical state of the asset, while accessibility information determines whether a current maintenance opportunity should be used.

The sensitivity analysis further shows that the value and timing of intervention depend jointly on accessibility, degradation, and economic characteristics. Stronger weather persistence increases the differentiation between favorable and unfavorable accessibility periods, while stronger seasonality creates greater variation in maintenance thresholds over the year. Faster degradation leads to earlier intervention, whereas greater uncertainty in time to failure makes the threshold more responsive to changing accessibility conditions. Economic parameters have similarly intuitive effects: lower preventive maintenance costs and higher degradation-related operating losses favor earlier intervention, while costly preventive maintenance and lower operating losses support postponement. Together, the sensitivity and benchmark analysis show that the optimal threshold is not universal: it changes with accessibility, degradation, and economic conditions, and failing to adapt the threshold to these conditions can increase long-run cost. 

Several extensions provide directions for future research. These include multiple interacting assets competing for limited maintenance resources, maintenance durations spanning multiple periods, imperfect maintenance, 
and partially observed asset condition. Incorporating these features would allow the value of accessibility-aware condition-based maintenance to be studied in richer operational environments while building on the structural insights developed in the current study.



\bibliographystyle{pomsref}

 \let\oldbibliography\thebibliography
 \renewcommand{\thebibliography}[1]{%
    \oldbibliography{#1}%
    \baselineskip14pt 
    \setlength{\itemsep}{10pt}
 }
\bibliography{ref1}

@article{ulukus2012optimal,
  title={Optimal replacement policies under environment-driven degradation},
  author={Ulukus, M Yasin and Kharoufeh, Jeffrey P and Maillart, Lisa M},
  journal={Probability in the Engineering and Informational Sciences},
  volume={26},
  number={3},
  pages={405--424},
  year={2012},
  publisher={Cambridge University Press}
}

@article{byon2010optimal,
  title={Optimal maintenance strategies for wind turbine systems under stochastic weather conditions},
  author={Byon, Eunshin and Ntaimo, Lewis and Ding, Yu},
  journal={IEEE Transactions on Reliability},
  volume={59},
  number={2},
  pages={393--404},
  year={2010},
  publisher={IEEE}
}

@article{byon2010season,
  title={Season-dependent condition-based maintenance for a wind turbine using a partially observed Markov decision process},
  author={Byon, Eunshin and Ding, Yu},
  journal={IEEE Transactions on Power Systems},
  volume={25},
  number={4},
  pages={1823--1834},
  year={2010},
  publisher={IEEE}
}

@article{byon2013wind,
  title={Wind turbine operations and maintenance: a tractable approximation of dynamic decision making},
  author={Byon, Eunshin},
  journal={IIE Transactions},
  volume={45},
  number={11},
  pages={1188--1201},
  year={2013},
  publisher={Taylor \& Francis}
}

@article{zheng2020optimal,
  title={Optimal preventive maintenance for wind turbines considering the effects of wind speed},
  author={Zheng, Rui and Zhou, Yifan and Zhang, Yingzhi},
  journal={Wind Energy},
  volume={23},
  number={11},
  pages={1987--2003},
  year={2020},
  publisher={Wiley Online Library}
}

@article{si2018optimal,
  title={An optimal condition-based replacement method for systems with observed degradation signals},
  author={Si, Xiaosheng and Li, Tianmei and Zhang, Qi and Hu, Xiaoxiang},
  journal={IEEE Transactions on Reliability},
  volume={67},
  number={3},
  pages={1281--1293},
  year={2018},
  publisher={IEEE}
}

@article{zhang2013optimal,
  title={Optimal maintenance policy for multi-component systems under Markovian environment changes},
  author={Zhang, Zhuoqi and Wu, Su and Li, Binfeng and Lee, Seungchul},
  journal={Expert Systems with Applications},
  volume={40},
  number={18},
  pages={7391--7399},
  year={2013},
  publisher={Elsevier}
}

@article{barlow1960optimum,
  title={Optimum preventive maintenance policies},
  author={Barlow, Richard and Hunter, Larry},
  journal={Operations Research},
  volume={8},
  number={1},
  pages={90--100},
  year={1960},
  publisher={Informs}
}

@article{mccall1965maintenance,
  title={Maintenance policies for stochastically failing equipment: a survey},
  author={McCall, John J},
  journal={Management Science},
  volume={11},
  number={5},
  pages={493--524},
  year={1965},
  publisher={INFORMS}
}

@article{pierskalla1976survey,
  title={A survey of maintenance models: the control and surveillance of deteriorating systems},
  author={Pierskalla, William P and Voelker, John A},
  journal={Naval Research Logistics Quarterly},
  volume={23},
  number={3},
  pages={353--388},
  year={1976},
  publisher={Wiley Online Library}
}

@article{sherif1981optimal,
  title={Optimal maintenance models for systems subject to failure--a review},
  author={Sherif, YS and Smith, ML},
  journal={Naval Research Logistics Quarterly},
  volume={28},
  number={1},
  pages={47--74},
  year={1981},
  publisher={Wiley Online Library}
}

@article{deJongeScarf2020review,
  title={A review on maintenance optimization},
  author={De Jonge, Bram and Scarf, Philip A},
  journal={European Journal of Operational Research},
  volume={285},
  number={3},
  pages={805--824},
  year={2020},
  publisher={Elsevier}
}

@book{puterman2014markov,
  title={Markov Decision Processes: Discrete Stochastic Dynamic Programming},
  author={Puterman, Martin L},
  year={2014},
  publisher={John Wiley \& Sons}
}

@article{jardine2006review,
  title={A review on machinery diagnostics and prognostics implementing condition-based maintenance},
  author={Jardine, Andrew KS and Lin, Daming and Banjevic, Dragan},
  journal={Mechanical Systems and Signal Processing},
  volume={20},
  number={7},
  pages={1483--1510},
  year={2006},
  publisher={Elsevier}
}

@article{alaswad2017review,
  title={A review on condition-based maintenance optimization models for stochastically deteriorating system},
  author={Alaswad, Suzan and Xiang, Yisha},
  journal={Reliability Engineering \& System Safety},
  volume={157},
  pages={54--63},
  year={2017},
  publisher={Elsevier}
}

@article{elwany2011structured,
  title={Structured replacement policies for components with complex degradation processes and dedicated sensors},
  author={Elwany, Alaa H and Gebraeel, Nagi Z and Maillart, Lisa M},
  journal={Operations Research},
  volume={59},
  number={3},
  pages={684--695},
  year={2011},
  publisher={INFORMS}
}

@article{zhu2019dynamic,
  title={A dynamic programming-based maintenance model of offshore wind turbine considering logistic delay and weather condition},
  author={Zhu, Wenjin and Castanier, Bruno and Bettayeb, Belgacem},
  journal={Reliability Engineering \& System Safety},
  volume={190},
  pages={106512},
  year={2019},
  publisher={Elsevier}
}

@article{shafiee2019maintenance,
  title={Maintenance optimization and inspection planning of wind energy assets: Models, methods and strategies},
  author={Shafiee, Mahmood and S{\o}rensen, John Dalsgaard},
  journal={Reliability Engineering \& System Safety},
  volume={192},
  pages={105993},
  year={2019},
  publisher={Elsevier}
}

@article{ren2021offshore,
  title={Offshore wind turbine operations and maintenance: A state-of-the-art review},
  author={Ren, Zhengru and Verma, Amrit Shankar and Li, Ye and Teuwen, Julie JE and Jiang, Zhiyu},
  journal={Renewable and Sustainable Energy Reviews},
  volume={144},
  pages={110886},
  year={2021},
  publisher={Elsevier}
}

@article{zhang2019opportunistic,
  title={Opportunistic maintenance strategy for wind turbines considering weather conditions and spare parts inventory management},
  author={Zhang, Chen and Gao, Wei and Yang, Tao and Guo, Sheng},
  journal={Renewable Energy},
  volume={133},
  pages={703--711},
  year={2019},
  publisher={Elsevier}
}

@article{de2019discretizing,
  title={Discretizing continuous-time continuous-state deterioration processes, with an application to condition-based maintenance optimization},
  author={de Jonge, Bram},
  journal={Reliability Engineering \& System Safety},
  volume={188},
  pages={1--5},
  year={2019},
  publisher={Elsevier}
}

@article{sun2023robust,
  title={Robust condition-based production and maintenance planning for degradation management},
  author={Sun, Qiuzhuang and Chen, Piao and Wang, Xin and Ye, Zhi-Sheng},
  journal={Production and Operations Management},
  volume={32},
  number={12},
  pages={3951--3967},
  year={2023},
  publisher={SAGE Publications Sage CA: Los Angeles, CA}
}

@article{liu2017condition,
  title={A condition-based maintenance policy for degrading systems with age-and state-dependent operating cost},
  author={Liu, Bin and Wu, Shaomin and Xie, Min and Kuo, Way},
  journal={European Journal of Operational Research},
  volume={263},
  number={3},
  pages={879--887},
  year={2017},
  publisher={Elsevier}
}

@article{staffell2014does,
  title={How does wind farm performance decline with age?},
  author={Staffell, Iain and Green, Richard},
  journal={Renewable Energy},
  volume={66},
  pages={775--786},
  year={2014},
  publisher={Elsevier}
}

@article{dao2021integrated,
  title={Integrated condition-based maintenance modelling and optimisation for offshore wind turbines},
  author={Dao, Cuong D and Kazemtabrizi, Behzad and Crabtree, Christopher J and Tavner, Peter J},
  journal={Wind Energy},
  volume={24},
  number={11},
  pages={1180--1198},
  year={2021},
  publisher={Wiley Online Library}
}

@article{raza2019optimal,
  title={Optimal preventive maintenance of wind turbine components with imperfect continuous condition monitoring},
  author={Raza, Ahmed and Ulansky, Vladimir},
  journal={Energies},
  volume={12},
  number={19},
  pages={3801},
  year={2019},
  publisher={MDPI}
}

@article{panagiotidou2010statistical,
  title={Statistical process control and condition-based maintenance: A meaningful relationship through data sharing},
  author={Panagiotidou, Sofia and Tagaras, George},
  journal={Production and Operations Management},
  volume={19},
  number={2},
  pages={156--171},
  year={2010},
  publisher={Wiley Online Library}
}

@article{hagen2013multivariate,
  title={A multivariate Markov weather model for O\&M simulation of offshore wind parks},
  author={Hagen, Brede and Simonsen, Ingve and Hofmann, Matthias and Muskulus, Michael},
  journal={Energy Procedia},
  volume={35},
  pages={137--147},
  year={2013},
  publisher={Elsevier}
}

@article{martini2017accessibility,
  title={Accessibility assessment for operation and maintenance of offshore wind farms in the North Sea},
  author={Martini, Michele and Guanche, Ra{\'u}l and Losada, I{\~n}igo J and Vidal, C{\'e}sar},
  journal={Wind Energy},
  volume={20},
  number={4},
  pages={637--656},
  year={2017},
  publisher={Wiley Online Library}
}

@article{drent2024condition,
  title={Condition-based production for stochastically deteriorating systems: Optimal policies and learning},
  author={Drent, Collin and Drent, Melvin and Arts, Joachim},
  journal={Manufacturing \& Service Operations Management},
  volume={26},
  number={3},
  pages={1137--1156},
  year={2024},
  publisher={INFORMS}
}

@article{sloan2000combined,
  title={Combined production and maintenance scheduling for a multiple-product, single-machine production system},
  author={Sloan, Thomas W and Shanthikumar, J George},
  journal={Production and Operations Management},
  volume={9},
  number={4},
  pages={379--399},
  year={2000},
  publisher={Wiley Online Library}
}

@article{wang2025learning,
  title={Learning to balance the performance and deterioration of aging systems through derating},
  author={Wang, Jue},
  journal={Production and Operations Management},
  volume={34},
  number={7},
  pages={1743--1758},
  year={2025},
  publisher={SAGE Publications Sage CA: Los Angeles, CA}
}

@article{shahri2026data,
  title={A data-driven robust approach to a problem of optimal replacement in maintenance},
  author={Shahri Majarshin, Sina and Marandi, Ahmadreza and Fecarotti, Claudia and van Houtum, Geert-Jan},
  journal={Annals of Operations Research},
  pages={1--32},
  year={2026},
  publisher={Springer}
}

@article{liang2023reliability,
  title={A reliability model for systems subject to mutually dependent degradation processes and random shocks under dynamic environments},
  author={Liang, Qingzhu and Yang, Yinghao and Peng, Changhong},
  journal={Reliability Engineering \& System Safety},
  volume={234},
  pages={109165},
  year={2023},
  publisher={Elsevier}
}

@article{li2022failure,
  title={Failure rate assessment for onshore and floating offshore wind turbines},
  author={Li, He and Peng, Weiwen and Huang, Cheng-Geng and Guedes Soares, C},
  journal={Journal of Marine Science and Engineering},
  volume={10},
  number={12},
  pages={1965},
  year={2022},
  publisher={MDPI}
}

@article{dao2019wind,
  title={Wind turbine reliability data review and impacts on levelised cost of energy},
  author={Dao, Cuong and Kazemtabrizi, Behzad and Crabtree, Christopher},
  journal={Wind Energy},
  volume={22},
  number={12},
  pages={1848--1871},
  year={2019},
  publisher={Wiley Online Library}
}

@article{kiadaliry2026weibull,
  title={Weibull-Neural Network Framework for Wind Turbine Lifetime Monitoring and Disturbance Identification},
  author={Kiadaliry, Fatemeh and Raissi, Sadigh and Komijan, Alireza Rashidi},
  journal={Wind Energy},
  volume={29},
  number={4},
  pages={e70103},
  year={2026},
  publisher={Wiley Online Library}
}

@article{kolesar1966minimum,
  title={Minimum cost replacement under Markovian deterioration},
  author={Kolesar, Peter},
  journal={Management Science},
  volume={12},
  number={9},
  pages={694--706},
  year={1966},
  publisher={INFORMS}
}

@article{mckone2002guidelines,
  title={Guidelines for implementing predictive maintenance},
  author={McKone, Kathleen E and Weiss, Elliott N},
  journal={Production and Operations Management},
  volume={11},
  number={2},
  pages={109--124},
  year={2002},
  publisher={Wiley Online Library}
}

@article{batun2012reassessing,
  title={Reassessing tradeoffs inherent to simultaneous maintenance and production planning},
  author={Batun, Sakine and Maillart, Lisa M},
  journal={Production and Operations Management},
  volume={21},
  number={2},
  pages={396--403},
  year={2012},
  publisher={SAGE Publications Sage CA: Los Angeles, CA}
}




\end{document}